\documentclass{llncs}
\usepackage{amssymb}
\usepackage{amsfonts}
\usepackage{amsmath}
\usepackage{fancyhdr}
\usepackage{epsfig}
\usepackage{mathtools}
\usepackage{graphicx}
\usepackage {longtable}
\usepackage{lineno,hyperref}
\usepackage{graphicx}
\usepackage{tabularx}
\usepackage{enumitem}
\usepackage{algorithm}
\usepackage{algorithmic}
\usepackage{wrapfig}
\usepackage{longtable}
\usepackage{xcolor}
\usepackage{tikz}
\usetikzlibrary{positioning}
\usepackage{standalone}
\usepackage{subcaption}
\usepackage{url}
\usepackage{booktabs,multirow}
\newcommand{\Ic}{\ensuremath{\mathcal  I}}
\newcommand{\Jc}{\ensuremath{\mathcal  J}}

\newcommand{\rb}{\ensuremath{\mathbf{r}}}
\newcommand{\bfs}{\ensuremath{\mathbf{s}}}

\newcommand{\xb}{\ensuremath{\mathbf{x}}}

\newcommand{\beg}{\begin{example}}
\newcommand{\eeg}{\end{example}}
\newcommand{\bit}{\begin{itemize}}
\newcommand{\eit}{\end{itemize}}
\newcommand{\beq}{\begin{equation}}
\newcommand{\eeq}{\end{equation}}
\newcommand{\beqn}{\begin{equation*}}
\newcommand{\eeqn}{\end{equation*}}
\newcommand{\bea}{\begin{eqnarray}}
\newcommand{\eea}{\end{eqnarray}}
\newcommand{\bean}{\begin{eqnarray*}}
\newcommand{\eean}{\end{eqnarray*}}
\newcommand{\ben}{\begin{enumerate}}
\newcommand{\een}{\end{enumerate}}
\newcommand{\bprop}{\begin{proposition}}
\newcommand{\eprop}{\end{proposition}}
\newcommand{\bthm}{\begin{theorem}}
\newcommand{\ethm}{\end{theorem}}
\newcommand{\blem}{\begin{lemma}}
\newcommand{\elem}{\end{lemma}}
\newcommand{\bcor}{\begin{corollary}}
\newcommand{\ecor}{\end{corollary}}
\newcommand{\bdefn}{\begin{definition}}
\newcommand{\edefn}{\end{definition}}
\newcommand{\brmk}{\begin{remark}}
\newcommand{\ermk}{\end{remark}}

\newcommand{\slen}[1]{\setlength\itemsep{#1 mm}}

\begin{document}

\title{Equivalent Non-linear Shift Register Transformations and Espresso}
\titlerunning{NFSR Transformations of Espresso}
\author{Anirban Ghatak \inst{1}\orcidID{0000-0002-8654-2411} \and Kaushik Sarkar \inst{2} \and Anupam Chattopadhyay \inst{3} \and Ambrish Awasthi \inst{4}\orcidID{0009-0000-9066-5708} \and Indivar Gupta\inst{4}}
\authorrunning{A. Ghatak et al.}
\institute{Indian Statistical Institute, Kolkata, India\\
\email{ghatak.anirban@gmail.com}
\and Narula Institute of Technology, Kolkata\\
\email{kaushik.srkr@gmail.com}
 \and CCDS, Nanyang Technological University, Singapore\\
 \email{anupam@ntu.edu.sg}
\and Scientific Analysis Group, DRDO, Metcalfe House, Delhi, India, 110054\\
\email{ambrishawasthi.sag@gov.in}}
\maketitle
\begin{abstract}
	The stream cipher Espresso was designed by E. Dubrova and M. Hell as a hardware-efficient candidate for 5G security. It has two components - a $ 256 $-bit binary non-linear feedback shift register (NFSR), with $14$ feedback taps, and a $ 20 $-variable non-linear output function. However, the security of Espresso with regard to several standard cryptanalytic attacks are estimated using an alternate NFSR, which has only two feedback taps. The above analysis is based on the claim that this alternate NFSR is obtained using a transformation algorithm which preserves the output sequence of the original NFSR. This issue has become all the more significant in the light of recent results on hardware performance evaluations, which assume that the stated equivalence holds.\\ 
	This article presents the first successful attempt in settling the above question of Espresso NFSR equivalence. We arrive at our main result by a careful analysis of existing transformation methods relevant to the question. In doing so, we first identify and correct a critical flaw in the proof of a foundational result of E. Dubrova. Next, in the context of Espresso equivalence, we analyze and unify the transformation framework of Dubrova and the one proposed by Yao and Parampalli. Based on our analysis, we have succeeded in transforming the LFSR in an Espresso-equivalent design reported by Yao and Parampalli to the Espresso alternate NFSR. Further, we have obtained a corresponding non-linear output function which achieves output-equivalence with the LFSR design, and hence, with Espresso.
\end{abstract}
\keywords{Non-linear shift registers\and Galois and Fibonacci implementation \and Espresso security}		
\section{Introduction}
Linear and non-linear feedback shift registers (LFSRs and NFSRs/NLFSRs) are key design components of several families of stream ciphers. The properties of LFSRs have been well-studied since the seminal work of S. Golomb \cite{Golomboss}, and a recent work \cite{claudpalashi} rigorously showcases both new analyses/construction and a survey of key contributions pertaining to LFSR-based designs.  In comparison, the analysis of NFSRs is still a work in progress. But recent advances indicate that it may be possible to provide the same level of cryptographic security as LFSR-based designs using NFSRs but with a noticeably reduced hardware footprint. Efforts in this direction have resulted in the design of NFSR-based stream ciphers like LIZARD \cite{lizard}, TRIVIUM \cite{trivium}, Grain \cite{sasya}  and, more recently, Espresso \cite{dubrova17}. However, providing security guarantees for NFSR-based stream ciphers is more of a challenge than LFSR-based designs owing to the comparative difficulty in NFSR analysis. A possible way out may be to formulate an \emph{equivalent design} involving an LFSR, or a simpler NFSR, whose analysis is more tractable.\\
Another aspect of significance in the design of shift-register based stream ciphers is the choice between \emph{Galois} and \emph{Fibonacci} configurations of shift registers. In a Fibonacci NFSR there is only a single feedback update function from the intermediate stages to the terminal stage, all the other stages accept a single input from just the preceding stage. Hence, the Fibonacci configuration is a restricted case of the general Galois configuration, which potentially allows feedback between any pair of stages in the shift register. In \cite{Mansodubri}, it is claimed that a stream cipher based on a Galois-configuration shift register can achieve higher clock frequency on application-specific integrated circuit (ASIC). Again, the security analysis of a stream cipher using the Fibonacci configuration is comparatively easier and well-studied in literature.\\
The stream cipher Espresso was proposed by E. Dubrova and M. Hell \cite{dubrova17} as a hardware-efficient candidate for providing security in 5G communication. This cipher offers an excellent case study for the aforementioned competing design choices - Galois vs. Fibonacci and LFSR vs. NFSR. As a contender for 5G security, Espresso claims better performance than Grain-128 and Trivium in the under-$1500$ Gate Equivalence (GE) chip-size regime with the stated parameters: $ 1497 $ GE area, $ 2.22 $ Gbits/sec throughput and $ 232 $ nanosecond latency.\\
Espresso is based on a $ 256 $-bit Galois NFSR, having $ 14 $ non-linear feedback update functions at different stages, and a non-linear output function, involving the outputs of $ 20 $ different stages. However, the security of Espresso against standard cryptanalysis strategies like linear approximation attacks and TMDTO attacks is actually discussed in \cite{dubrova17} with respect to an ``equivalent'' NFSR (cf. p. 277 of \cite{dubrova17}), having only two non-linear update functions. Hence, Espresso is clearly an instance of the ``analysis-on-equivalent-design'' strategy mentioned earlier.\\
The motivation for the present work arose from an examination of the claimed equivalence of the two NFSRs presented in the Espresso proposal paper in the context of cryptanalysis of Espresso. An immediate issue had already been reported in \cite{yapa21}: no ``equivalent'' non-linear output function is provided in \cite{dubrova17} corresponding to the alternative Galois NFSR. This precludes any comparison between the actual keystreams generated by Espresso and an ``equivalent'' cipher based on the alternative NFSR. Further unsubstantiated claims made in \cite{dubrova17} in this context were reported in \cite{space25}.

A series of papers by Yao and Parampalli (\cite{yapa21}, \cite{yaoimproved}, \cite{yapa22}) culminated in defining the so-termed `compensation method' for transformations of several types of Fibonacci and Galois NFSRs. Their method applied to Espresso is claimed to yield (cf. \cite{yapa21}, Section 6.1):\\
``... a Fibonacci LFSR with a nonlinear output function. Specifically, the output function consists of $ 2289 $ monomials. There are $ 104 $ variables in the output function and the algebraic degree is $ 12 $.''\\
But the question of whether or not the two NFSRs presented in \cite{dubrova17} are actually equivalent was still not resolved. Notwithstanding the existence of this gap, some recent works have already attempted improved implementation of Espresso configurations (e.g. \cite{espeschindes}), assuming the validity of the equivalence as claimed in \cite{dubrova17}.\\

In this article, we expand on the results reported in \cite{space25} and attempt to conclude a proposed resolution to the \emph{Espresso equivalence problem} reported therein. We begin with a careful analysis of a sequence of transformation techniques which are outlined in E. Dubrova's works \cite{dubrova09}, \cite{dubrova10},\cite{dubrova14}, and those of G. Yao and U. Parampalli \cite{yapa21}, \cite{yapa22}) in the context of the equivalence problem. We then proceed to define a unified framework of the Dubrova and Yao-Parampalli algorithms, the \emph{Dubrova-Yao-Parampalli (D-Y-P) framework}, to gain new insights as to the claimed equivalence. Finally, we present experimental results and theoretical analysis to \emph{achieve this claimed equivalence} using the D-Y-P framework.\\
Specifically, our contributions, collated with those reported in \cite{space25}, are as follows.
\begin{enumerate}
	\item We identify a critical mis-step in the proof of the foundational theorem of Dubrova's first algorithm (cf. Theorem 1, Section IV, \cite{dubrova09}) and provide a correct proof (Section \ref{secbhuldubri}).
	\item We show that Dubrova's first algorithm \cite{dubrova09}, is essentially subsumed into the framework of the second algorithm \cite{dubrova14} (cf. \cite{space25}) (Subsection \ref{seceqdubri}).
	\item We establish that the Yao-Parampalli algorithm (\cite{yapa21}, \cite{yapa22}) is equivalent to Dubrova's second algorithm in the case of Fibonacci-to-uniform Galois NFSR transformation (cf. \cite{space25}) for computing the feedback functions and matching initial states (Subsection \ref{secyapdubri}).
	\item Using the D-Y-P framework, we obtain a NFSR/non-linear function ensemble from the Yao-Parampalli LFSR-based design, featuring the Espresso alternate NFSR $ F $ (cf. Subsection \ref{subs:lofornofor}). Specifically, we have obtained a non-linear output function corresponding to the NFSR $F$, thereby closing the gap in the equivalence claims made in \cite{dubrova17}.
	\item We have supported our experimental results\footnote{The Sagemath code for our reference implementation is available at:
		
		https://github.com/kaushiksrkr/YPFibonacci2EspressoFGalois.git} using the obtained `cipher' featuring the Espresso alternate NFSR $ F $, with theoretical analysis (cf. Section \ref{secbom}). Our results allow us to conclusively state that:\\
	\emph{Using the D-Y-P transformation framework, we have positively settled the Espresso Equivalence Problem by proving the equivalence of the Espresso designs in the sense of matched outputs of the two NFSR/non-linear function ensembles.}
\end{enumerate} 

%AnuComments:
%- tikk Page 3. Contributions. Could highlight (with hyper-ref) which contribution is located in which section of this paper, for easy navigation.
%- tikk Page 4: Reference implementation, either to be mentioned in footnote or in appendix.
%- Natikk Section 2.1 first para. “Under the following heads”. It could be some sub-sub-section directly. The triangle figure would be very apt at this position, highlighting which transformation is achieving which objective (security/implementation).
%- tikk Section 2 is like a preliminary note on what has been done by D, Y and P. It would be useful to put a figure or table on what transformation is proposed by whom in which paper, where theoretical/empirical validation is missing etc.. That sets the background on what’s to come in section 3 and 4. Otherwise, there is a lot of back-n-forth required to remember the connections. This is tried with the ‘observations’ and ‘remarks’ but, could be more systematically summarised. 
%- Can the theorem 5 statement be made w/o referring to theorem 1 within the text of it?
%- Page 25 appears more of the criticism of Y-P paper rather than presenting new result. Can we reduce or remove redundancy and rather have the logical flow like - this was claimed, this does not work, this is how we did it. In the current detailing, this part reads more like a ‘review of Y-P paper’.
%- tikk Towards the end of the paper, D-Y-P is mentioned together. It has to be stated somewhere, which theorems put together forms this framework.
\section{Preliminaries and Prior Work}
A foundational analysis of non-linear feedback shift registers, in particular, the cycle structure of the \emph{state transition graph}, is given, like all matters ``shift register'', in Golomb's text (\cite{Golomboss}, Part Three, Chapter VI onwards). We proceed to outline two major algorithms in the works of E. Dubrova, dealing with the transformations between Fibonacci and the so-called uniform Galois NFSRs. This is followed by a brief account of the NFSR transformation framework in the works of G. Yao and U. Parampalli.
\subsection{The Transformation Framework of E. Dubrova}\label{shegoldubri}
Our treatment of the transformation framework of E. Dubrova is based on the results in \cite{dubrova09}, \cite{dubrova10}, and \cite{dubrova14}. In later sections we show that there is an evolutionary thread underlying these works which culminated in the Espresso design in \cite{dubrova17}. In the following brief exposition, we segregate Dubrova's results in the three representative papers under the following heads.
\begin{itemize}
	\item Transformation from Fibonacci to \emph{uniform} Galois NFSR;
	\item Matching initial states in Fibonacci-uniform Galois NFSR transformations;
	\item Extension to (uniform) Galois-Galois NFSR transformations.
\end{itemize}
\subsubsection{Fibonacci to Uniform Galois NFSR Transformation \cite{dubrova09}}
The first algorithm of E. Dubrova in \cite{dubrova09} considers the transformation of a Fibonacci NFSR to a \emph{uniform} Galois NFSR, which guarantees the following notion of equivalence between the NFSRs (cf. \emph{Definition 1}, p.5265, \cite{dubrova09}):\\
``Two NLFSRs are equivalent if their sets of output sequences are equal.''\\
In \cite{dubrova09}, the \emph{output bit} of an NFSR is specified to be the $ 0 $-th bit (cf. Section II), and so, the transformation attempts to match the output sequence at the $ 0 $-th stage of the original Fibonacci NFSR with that at the $ 0 $-th stage of the transformed Galois NFSR.\\
The validity of this transformation is based on the uniform property of a Galois NFSR, which is defined as follows(cf. \emph{Definition 7}, \cite{dubrova09}).
\bdefn
The \emph{terminal bit} of an $ n $-bit NFSR is the bit with the minimal index which satisfies the following condition: For all bits $ i $ such that $ i<\tau $, the feedback function $ f_i $ is of type $ f_i = x_{(i+1)_n} $, where $(i+1)_n := (i+1) \mod n$. 
\edefn
\begin{definition}\label{defuni}
	An $ n $-bit NFSR is \emph{uniform} if 
	\begin{enumerate}[label = {(\alph*)}]
		\item all its feedback functions are singular functions of the following form: 
		\beq\label{eqfnuni}
		f_i (x_0, x_1, \cdots, x_{n-1}) = x_{(i+1)_n}\oplus g_i(x_0, x_1, \cdots, x_{n-1}) 
		\eeq
		$ g_i: \{0,1\}^{n-1} \rightarrow \{0,1\}  $ be such that $ (i+1)_n \notin \text{dep}(g_i) $.
		\item for all its bits $ i $ such that $ i>\tau $, the following condition holds:
		\beq\label{eqcondunif}
		\text{max\_index}(g_i)\leq \tau
		\eeq
		where $ \tau $ is the terminal bit of the NFSR.
	\end{enumerate}
\end{definition}
In the above context, the \emph{dependence set} of a Boolean function $ f $ is denoted by:
\[\text{dep}(f) := \{i \mid f\lvert_{x_i =0} \neq f\lvert_{x_i =1}\} \]
where $ f\lvert_{x_i =j} = f(x_0, x_1, \cdots, x_{i-1},j, x_{i+1}, \cdots, x_{n-1}), $ for $j\in \{0,1\}$.\\
The main algorithm presented in \cite{dubrova09} (cf. Section V) is a transformation algorithm from a Fibonacci to a \emph{fully shifted}, uniform Galois NFSR, which is claimed to preserve the output sequence at the $ 0 $-th stage. The term ``fully shifted" refers to the fact that this construction guarantees the maximum possible shift from the Fibonacci configuration while maintaining the uniform property.
\subsubsection*{Matching Initial States in Fibonacci-Galois Transformations \cite{dubrova10}}
In this work dealing with transformations between equivalent NFSRs, in the sense of \cite{dubrova09}, a formula for the \emph{matching} initial state of a uniform Galois NFSR is given in terms of the initial state of the Fibonacci NFSR.
\bthm[\emph{Theorem 3}, p. 2965, \cite{dubrova10}]\label{thm:compendubri}
Let $N_F$ be an $n$-bit Fibonacci NLFSR and $N_G$ be an equivalent to $N_F$ uniform Galois NLFSR with the terminal bit $\tau, \, 0\leq \tau < n-1$, and the feedback functions of type
\begin{align*}
	f_{n-1}(x) & = x_0 \oplus g_{n-1}(x_0, \ldots, x_{\tau})\\
	f_{n-1}(x) & = x_0 \oplus g_{n-1}(x_0, \ldots, x_{\tau})\\
	& \cdots\\
	f_{\tau}(x) & = x_{\tau +1} \oplus g_{n-1}(x_0, \ldots, x_{\tau})\\
	f_{\tau-1}(x) & = x_{\tau}\\
	& \cdots \\
	f_{0}(x) & = x_1	
\end{align*}
where $g_{n-1}, g_{n-2},\ldots, g_{\tau}$ are Boolean functions.\\
If $N_F$ is initialized to a state $s= (s_0,s_1, \ldots, s_{n-1})$, and $N_G$ is initialized to the state $r = (s_0, s_1, \ldots, s_{\tau}, r_{\tau +1}, r_{\tau +2}, \ldots, r_{n-1})$ such that
\beq\label{eqdubricompenis}
r_i = s_i \oplus g_{i-1}(s)\oplus g_{i-2}\lvert_{+1}(s)\oplus \cdots \oplus g_{\tau}\lvert_{+(i - \tau -1)}(s)
\eeq
for all $i \in \{n-1, n-2, \ldots, \tau +1\}$, then $N_F $ and $N_G$ generate the same output sequences.
\ethm
In Section \ref{secyapdubri}, it will be shown that this result can be interpreted in terms of \emph{simple compensation} in the Yao-Parampalli transformation framework \cite{yapa22} (cf. Subsection \ref{subs:yap}).
\subsubsection*{General Framework for Galois-to-Galois NFSR Transformations \cite{dubrova14}}
In \cite{dubrova14}, a more general algorithm was presented which handles transformations between two NFSRs in the Galois configuration via the operation of \emph{valid shifting} described as follows.
\begin{definition}\label{defshft}
	Given an $ n $-variate mapping defined by functions of the form in Equation \ref{eqfnuni}, a \emph{shifting} $ f_i\xrightarrow{m} f_j  $, $ i,j \in \{0,1,\ldots, n-1\}, \, i\neq j $, shifts the monomial $ m $ in the algebraic normal form (ANF) of $ f_i $ to the ANF of $ f_j $, such that every index $ a\in \text{dep}(m) $ is changed to $ b = (a-i+j) \mod n $.
\end{definition}
\begin{definition}\label{defvs}
	Consider an $ n $-variate mapping with functions of the form in Equation \ref{eqfnuni}, such that the values computed by $ f_{z}, \, z \in \{0,1,\ldots, n-1\} $ form the \emph{output sequence}. Then a shifting $ g_i\xrightarrow{m} g_j  $ is \emph{valid} if, for every index $ a\in \text{dep}(m) $ and $ b $ as in Definition \ref{defshft}, the following conditions hold.
	\begin{enumerate}
		\item For each $ c\in [a,b]\setminus{i} $, $ g_c =0 $; if $ i \in [a,b] $, then $ g_i^{\ast} = 0$ $ (g_i^{\ast} = g_i \oplus m) $;
		\item For each $ k \in [i,j] $, $ k\notin \text{dep}(g_i^{\ast}) $.\\
		Further, $ k \notin \text{dep}(g_p) $ for all $ p \in \{0,1,\ldots, n-1\}\setminus{i} $;
		\item Neither $ [a,b] $ nor $ [i,j] $ contain both $ z $ and $ z-1 $.
	\end{enumerate}
\end{definition}
In the above definition, $ [i,j] $ stands for the set $ \{i,i-1, \ldots, j\} $ if $ i>j $, and the set $ \{i,i+1, \ldots, j\} $ if $ i<j $, with similar definitions for $ [a,b] $. Further, if multiple indices $ z $ constitute the output sequence, the third condition must be fulfilled for all pairs $ (z,z-1) $.\\
The following result from \cite{dubrova14} establishes that when a Galois NFSR is transformed to another through valid shiftings, the output sequences of the two NFSRs may differ only in a specific subset of the stages of the internal state. 
\begin{theorem}[E. Dubrova \cite{dubrova14}]\label{thdubri2}
	Let $ F  $ be a mapping as defined by Equation \ref{eqfnuni} and $ F' $ be a mapping obtained from $ F $ by applying a valid shifting \( g_i \xrightarrow{\text{m}} g_j
	\), where \( i,j \in \{0,\ldots,n-1\} \) and \( i \neq j \). Denote $ \Jc =\{i+1, i+2, \ldots, j\} $ and $ \Ic = \{i, i-1, \ldots, j+1\} $. Let \( F \) be initialized to the state \( (s_0, s_1, \ldots, s_{n-1}) \) and \( F' \) to the state \( (r_0, r_1, \ldots, r_{n-1}) \) such that:
	\begin{enumerate}
		\item For $k \in  \Ic $, $r_k = s_k \oplus m |_{ k - i - 1}$
		
		\item For $ k \in  \Jc$, $ r_k = s_k \oplus m | _{k - j - 1}$
		
		\item \( r_k = s_k \) for all \( k \in \{0, 1, \ldots, n-1\} \) not in either $ \Ic $ or $ \Jc $.
	\end{enumerate}
	Then the sequences of states generated by \( F \) and \( F' \) may differ only in the bit positions in $ \Ic $ or $ \Jc $, respectively.
\end{theorem}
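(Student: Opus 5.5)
We argue by induction on the clock cycle $t$, tracking the state vectors $\bfs^{(t)}$ of $F$ and $\rb^{(t)}$ of $F'$. The inductive hypothesis $H(t)$ has two parts. For $k \notin \Ic\cup\Jc$ we have $r^{(t)}_k = s^{(t)}_k$. For $k\in\Ic\cup\Jc$ the discrepancy $r^{(t)}_k\oplus s^{(t)}_k$ equals the appropriately time-shifted copy of the monomial $m$ prescribed in the statement, evaluated on $\bfs^{(t)}$: $m|_{k-i-1}$ on $\Ic$ and $m|_{k-j-1}$ on $\Jc$. Here $m|_{+d}$ means every variable index of $m$ is shifted by $d$ modulo $n$, as in Definition \ref{defshft}. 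The base case $t=0$ is exactly the choice of initial states in items 1--3. The key observation is that the discrepancy is a single monomial "in transit" between stages $i$ and $j$. In $F$ the term $m$ is injected at stage $i$ and then travels down the register. In $F'$ the shifted copy is injected at stage $j$. The two copies agree once they have propagated to a common stage, since Definition \ref{defshft} is designed so that the index shift $a\mapsto a-i+j$ compensates for the time delay $i-j$.

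For the inductive step we compute $r^{(t+1)}_k = f'_k(\rb^{(t)})$ and compare it with $s^{(t+1)}_k = f_k(\bfs^{(t)})$, splitting into cases.

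\textbf{Stages outside $\Ic\cup\Jc$, other than $i$ and $j$.} Here $f'_k=f_k = x_{k+1}\oplus g_k$. By condition 2 of Definition \ref{defvs}, $g_k$ (and $g_i^\ast$) do not depend on any index in $[i,j]$, that is, on any corrupted stage. So $g_k(\rb^{(t)}) = g_k(\bfs^{(t)})$. The term $x_{k+1}$ is also uncorrupted unless $k+1$ is the boundary stage of $\Ic\cup\Jc$. At that boundary the in-transit monomial must vanish or be exactly cancelled, which is what the index bookkeeping in items 1 and 2 encodes.

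\textbf{Stage $i$.} We have $f'_i = x_{i+1}\oplus g_i^\ast$ and $f_i = x_{i+1}\oplus g_i^\ast\oplus m$. The difference is $m(\bfs^{(t)})$ plus the discrepancy at stage $i+1$, and this should reproduce the formula required at stage $i$ at time $t+1$.

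\textbf{Stage $j$.} Symmetrically, $f'_j$ gains $m|_{j-i}$. This term must be evaluated on $\rb^{(t)}$, and condition 1 ($g_c=0$ for $c\in[a,b]$) ensures that the stages read by $m$ and its shift behave as pure shifts. Hence the variables of $m$ can be expressed as time-delayed values of a single stage, and $m|_{j-i}(\rb^{(t)})$ equals $m$ evaluated on the state $i-j$ steps earlier.

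\textbf{Interior stages of $\Ic$ and $\Jc$.} Again $g_k=0$ or $g_k$ is independent of the corrupted positions, so the discrepancy simply shifts down by one. This matches the replacement of $m|_{k-i-1}$ by $m|_{(k-1)-i-1}$ after one clock, using the fact that a pure shift region satisfies $x_a^{(t+1)} = x_{a+1}^{(t)}$.

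\textbf{Main obstacle.} The hardest step is justifying that evaluating shifted monomials on the current state equals evaluating $m$ on a shifted state. This requires that every stage in $[a,b]$ is a pure shift stage during the relevant window, and also that these stages are not themselves corrupted. Here condition 3 (excluding the pair $z,z-1$) and the wrap-around modulo $n$ must be handled carefully. I would first prove an auxiliary lemma: if $g_c=0$ for all $c\in[a,b]$, then $x^{(t+1)}_c = x^{(t)}_{c+1}$ along that interval for both $F$ and $F'$. Consequently $m|_{+1}(\bfs^{(t)}) = m(\bfs^{(t+1)})$ whenever $\text{dep}(m)$ and its shifts lie in the interval. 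With this lemma every case reduces to index arithmetic. Hence $r^{(t)}$ and $s^{(t)}$ differ only on $\Ic\cup\Jc$ for all $t$, which is the statement.
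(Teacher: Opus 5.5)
The paper quotes this result from Dubrova without proof, so your argument has to stand on its own. The architecture is right: a forward induction on $t$ with an exact formula for the discrepancy on $\Ic\cup\Jc$. However, every decisive step is asserted rather than computed (``should reproduce'', ``must vanish or be exactly cancelled''). Carrying out the computation shows that the invariant you wrote is false on $\Jc$.

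Take $i<j$ and look at stage $i$, which lies outside $\Jc$. We have $s^{(t+1)}_i=s^{(t)}_{i+1}\oplus g_i^\ast(\bfs^{(t)})\oplus m(\bfs^{(t)})$ and $r^{(t+1)}_i=r^{(t)}_{i+1}\oplus g_i^\ast(\rb^{(t)})$, with $g_i^\ast(\rb^{(t)})=g_i^\ast(\bfs^{(t)})$ by condition 2. So the step closes only if the discrepancy at stage $i+1$ is exactly $m(\bfs^{(t)})=m|_{(i+1)-i-1}(\bfs^{(t)})$. Your hypothesis gives $m|_{i-j}(\bfs^{(t)})$ instead. Likewise, the term $m|_{j-i}(\rb^{(t)})$ injected at stage $j$ equals $m|_{j-i-1}(\bfs^{(t+1)})$ one clock later, using that it reads no corrupted stage and that the stages it reads are pure shifts. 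At $k=j$ that is $m|_{k-i-1}$, not $m|_{-1}$.

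The invariant that actually propagates is $m|_{k-i-1}$ on both $\Ic$ and $\Jc$. Item 2 of the statement is correct only if its $m$ is read as the shifted monomial $m|_{j-i}$ sitting in $g_j$. For a concrete check, take $n=6$ with all $g_k=0$ except $g_3=x_0$, and shift it to $g_5$ as $x_2$; this is a valid shifting with output at stage $0$. Read literally (indices mod $6$), item 2 gives $r_4=r_5=0$. Then $r^{(1)}_3=0$ while $s^{(1)}_3=s_4\oplus s_0$, a discrepancy at stage $3\notin\Jc$. With $r_4=s_4\oplus s_0$ and $r_5=s_5\oplus s_1$, the relation is reproduced exactly at every clock.

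Two further steps need repair.

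First, for $i>j$ the stage-$j$ step needs the same-time identity $m|_{j-i}(\rb^{(t)})=m|_{j-i}(\bfs^{(t)})$, i.e.\ the shifted monomial reads no position of $\Ic$. This is condition 2 applied to $g_j$ after the shift. It is what lets the new term cancel the discrepancy $m|_{j-i}(\bfs^{(t)})$ arriving from stage $j+1$. Your paragraph instead appeals to condition 1 and to ``the state $i-j$ steps earlier''. This step does not use that, and for $t<i-j$ no such state exists in a forward induction.

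Second, your auxiliary lemma assumes $g_c=0$ for all $c\in[a,b]$, but condition 1 exempts $c=i$. If $i\in[a,b]$, stage $i$ of $F$ still carries $m$, so $F$ is not a pure shift there and $m|_{d}(\bfs^{(t)})=m|_{d-1}(\bfs^{(t+1)})$ can fail. You must either show this configuration cannot occur or treat it separately. It cannot occur if condition 2 is applied to the shifted monomial and no index wraps modulo $n$, since $b\notin[i,j]$ then forces $i\notin[a,b]$. Without this, the interior propagation step is unjustified.
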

\brmk
The above theorem is a generalization of a result in \cite{dubrova10} (\emph{Theorem 2}, p. 2964), which was proved for the shift of a feedback function from the terminal bit $\tau$ to a new terminal bit $\tau -1$.
\ermk
\subsection{The Yao-Parampalli Framework}\label{subs:yap}
In \cite{yapa22}, a classification of Galois NFSRs based on the forms of the feedback functions is discussed, where four types of NFSRs are described and four sets of transformation algorithms are presented. Of interest in this article is the Type-II Galois NFSR, which corresponds to Dubrova's uniform Galois NFSR. Henceforth, we will refer to this configuration as uniform Galois.\\
The so-called \emph{compensation method} was defined to satisfy the notion of output equivalence where the output is a function of several stages of the shift register. In this context, we state the following definitions from \cite{yapa22}.
\bdefn\label{defyapcomplst}
A \emph{compensation list} is a list with $ n $ elements denoted as $ C =
[c_0, c_1, \ldots , c_{n-1} ] $, where each element $ c_i (x_0 , \ldots , x_{n-1}),\,\, i \in \{0, \ldots , n - 1\} $ is a combination of variants of shifted monomials.\\
The combination of two compensation lists $ C_{m 1} $ and
$ C_{m 2} $, for two monomials $ m 1 $ and $ m 2 $, is defined as the bitwise XOR $ C = C_{m 1} \oplus C_{m 2} $.
\edefn
Given an $ n $-bit NFSR, the compensation list for shifting a monomial $ m $ from $ f_a $ to $ f_b $ with $ a, b \in [0, n - 1] $ and $ a > b $,  is constructed as
\[C = [c_0, c_1, \ldots , c_{n-1} ] = [0, \dots, 0, m|_{-(a-b)} , \ldots, m|_{-1} , 0,\ldots , 0]\]
where $ c_i = 0 $ for $ i \in [0, b]\,\, \text{or}\,\, [a + 1, n - 1] $ and $ c_i = m|_{-(a-i+1)} $ for $  i \in [b + 1, a] $. 
\subsubsection{Compensation vs. Iterative Compensation}
In addition to the compensation operation, where the state variables are directly XOR-ed with the corresponding entries in the compensation list, Yao and Parampalli define the operation of \emph{iterative compensation in descending (or ascending) order}. We discuss this with an illustration, following \cite{yapa22}, and henceforth term the direct compensation operation as \emph{simple} compensation to emphasize the difference.\\
Let the indices $ x_i, x_j, \,  j=i-1, $ be compensated by the entries $ c_i, c_j $ of the compensation list, respectively.\\
Simple compensation results in $ x_i\rightarrow x_i\oplus c_i  $ and $ x_j \rightarrow x_j\oplus c_j $.\\
For iterative compensation \emph{in the descending order}, $ x_i $ is first transformed to $ x_i\oplus c_i $. If $ c_i = c_i (\ldots, x_j, \ldots) $, then, in the next step, $ x_i\rightarrow x_i \oplus c_i(\ldots, x_j\oplus c_j, \ldots) $, i.e. the effect of compensation at the next lower index $ j= i-1 $ is applied iteratively on the compensated $ x_i $.
\begin{multline*}
	f(x_0, x_1, x_2) \xrightarrow{x_2\rightarrow x_2\oplus c_2(x_1)} f(x_0,x_1,x_2\oplus c_2(x_1))\\
	\xrightarrow{x_1\rightarrow x_1\oplus c_1} f(x_0,x_1\oplus c_1,x_2\oplus c_2(x_1\oplus c_1))
\end{multline*}

%This immediately leads to the following observation:\\
%\emph{The effect of iterative compensation in the descending (ascending) order will be different from that of simple compensation only for those indices $ i $ whose compensation modifiers $ c_i $ depend on $x_j, \, j<i $ ($ j>i $).\\}
\subsubsection*{Yao-Parampalli Algorithms for Transformations between Fibonacci and Uniform Galois NFSRs \cite{yapa22}}
\begin{itemize}
	\item \emph{Fibonacci-to-Uniform Galois}\\
	\textbf{Input:} $ n $-bit Fibonacci NFSR; output function $ f_z $; initial state $ X^0 $.\\
	\textbf{Output:} $n$-bit uniform Galois NFSR; transformed output function $\overline{f_z}$;\\ matched initial state $\hat{X}^0$.
\begin{enumerate}
	\slen{1.65}
	\item For each monomial $ m_j , \, j\in [1,r] $, compute the lowest permissible shifting index $ p_j $ using Dubrova's algorithm in \cite{dubrova09} and choose a position $ \tau_j $, such that $ p_j \leq \tau_j \leq n-1 $.
	\item Shift the $ r $ monomials from feedback function $ f_{n-1} $ to $ f_{\tau_1}, f_{\tau_2} , \ldots, f_{\tau_r} $, where $ 0\leq \tau_1 \leq \tau_2\leq \ldots \leq \tau_r \leq n-1 $.
	\item For each monomial $ m_j, \, j \in [1,r] $, construct compensation list 
	\[C_{m_j} = [0, \ldots, 0, m_j|_{-(n-1-\tau_j)}, \ldots, m_j|_{-1}]\]
	XOR all such lists to compute a combined compensation list:\\
	 $ C = C_{m_1}\oplus  C_{m_2}\oplus \ldots \oplus C_{m_r}$.
	\item Compensate the output function $ f_z $ by $ C $ iteratively in the descending order to obtain the output function $ \overline{f_z} $ for the transformed NFSR.
	\item The initial state computed as: $ \hat{x}^0_i = x_i^0 \oplus c_i(X^0) , \,\, i \in [0,n-1]$.
\end{enumerate}
\item \emph{Uniform Galois-to-Fibonacci}\\
\textbf{Input:} $n$-bit uniform Galois NFSR; output function $ f_z $; initial state $ X^0 $.\\
\textbf{Output:} $ n $-bit Fibonacci NFSR; transformed output function $\overline{f_z}$; matched initial state $\hat{X}^0$.
\begin{enumerate}
	\slen{1.65}
	\item Shift every monomial $ g_i \neq 0 $, from $ f_i, \, i \in [\tau, n-2] $ to $ f_{n-1} $.
	\item For each $ g_i\neq 0 $ in the first step, compute the compensation list 
	\[C_{g_i} = [0, \ldots, 0, g_i, g_i|_{+1}, \ldots, g_i|_{+(n-2-i)}]\]
	XOR all such lists to compute a combined compensation list: 
	\[ C = C_{g_0}\oplus  C_{g_1}\oplus \ldots \oplus C_{g_{n-2}}.\]
	
	\item Compensate the output function $ f_z $ by $ C $ \emph{without iteration} to obtain the output function $ \overline{f_z} $ for the transformed Fibonacci NFSR as
	\[\overline{f_z}  = f_z(x_0\oplus c_0, \ldots, x_{n-1}\oplus c_{n-1})\]
	\item The initial state $ \hat{X}^0 = [\hat{x}^0_0, \hat{x}^0_1, \ldots, \hat{x}^0_{n-1}] $ is given by: 
	\[\hat{x}^0_i = x_i^0 \oplus \overline{c_i}(X^0) , \,\, i \in [0,n-1]\]
	where $ \overline{c_i} $ is obtained from $ c_i $ by iterative compensation in descending order using $ C $ .
\end{enumerate}
\end{itemize}
\brmk
The Yao-Parampalli algorithms directly use Dubrova's formulas in \cite{dubrova09} to define the shifted monomials as well as the terminal bits associated with each shift.
\ermk
We conclude this section with a summary comparison of the relevant features of the Dubrova and Y-P frameworks in the following Table \ref{tablepam}. This serves as an overture to the analyses and results in subsequent sections, in particular, Section \ref{secyapdubri}, where the connection between the frameworks is analysed further in the context of Fibonacci-to-uniform Galois transformations.
\begin{table}
	\centering
	\begin{tabular}{|c|p{3cm}|p{2.7cm}|p{1.8cm}|p{1.8cm}|}
		\hline
		\textbf{Framework}& \textbf{Transformation Profile}  & \textbf{Equivalence} &

		 \textbf{Matching Initial States}  & \textbf{Validation \newline Status}  \\
		\hline
	\multirow{3}{*}	{Dubrova} & Fibonacci to Uniform Galois \cite{dubrova09} & Output at final stage & No & Proved \\
			\cline{2-5}
	                        & Fibonacci to Uniform Galois \cite{dubrova10} & Output at final stage & Yes & Proved\\
	                        \cline{2-5}
	                        & Galois to Galois \cite{dubrova14} & Output function on restricted subset of stages & Yes & Proved\\
		
		\hline
			& & & &\\
	\multirow{2}{*}	{Yao-Parampalli} & Fibonacci to Uniform Galois \cite{yapa22} & Output function on any subset of stages & Yes & Claimed \\
	\cline{2-5}
	                               & Uniform Galois to Fibonacci \cite{yapa22} & Output function on any subset of stages & Yes & ------ \\
		\hline
	\end{tabular}
	\vspace{0.5 cm}
	\caption{Comparison of Dubrova and Yao-Parampalli Frameworks}
	\label{tablepam}
\end{table}
%\textbf{Observations:}
%\begin{enumerate}
%	\item The Yao-Parampalli algorithms directly use Dubrova's formulas in \cite{dubrova09} to define the shifted monomials as well as the terminal bits associated with each shift.
%	\item It is observed that the sequence of iterative compensation and simple compensation are interchanged in the algorithms proposed for Fibonacci-to-uniform Galois and the transformation in the reverse direction.
%	\item Theoretical validation is attempted \emph{only} in the case of Fibonacci-to-uniform Galois transformation (cf. Lemma 1 and Lemma 2 of Section 2, and Theorem 1 of Section 3 in \cite{yapa22}).
%\end{enumerate}
\section{``Fixing" Dubrova's Main Theorem}\label{secbhuldubri}
All the transformation algorithms analysed in the subsequent sections of this article build on Dubrova's Fibonacci-to-uniform Galois NFSR transformation in \cite{dubrova09}. However, we show that there is a non-trivial flaw in the proof of the key result - termed the ``main theorem'' in \cite{dubrova09}. In the remainder of this section we rectify this discrepancy and present a correct proof of this theorem, thereby validating a keystone result in Dubrova's transformation framework.\\
We begin by stating the result in question from \cite{dubrova09}:
\bthm\label{thmain}
Given a uniform NLFSR with terminal bit $ a $, a shifting $ g_a \xrightarrow{P} g_b $,  $ P\subseteq A_{g_a} $, $ b<a $, results in an equivalent NLFSR if the transformed NLFSR is uniform as well.
\ethm
The proof uses the \emph{feedback graph} description of an NFSR as defined below \cite{dubrova09}.
\bdefn
The \emph{feedback graph} of an $ n $-bit NFSR is a directed graph with $ n $ vertices $ v_0,\ldots, v_{n-1} $ which represents bits $ 0,\ldots, n-1 $ of the NFSR, respectively. There is an edge from $ v_i $ to $ v_j $ if $ i \in \text{dep}(f_j) $.
\edefn
It is noted that the discrepancies in the proof of this theorem stem from an incorrect application of the `substitution' operation (Definition 3 of Section III in \cite{dubrova09}), stated below.
\bdefn\label{defsubs}
The operation \emph{substitution}, denoted by $ \text{sub}(v_i, v_j) $, is defined for any vertex $ v_i $, which has a unique predecessor $ v_j $. The substitution $ \text{sub}(v_i, v_j) $ removes $ v_i $ from the feedback graph and, for each successor $ v_k $ of $ v_i $, replaces the edge $ (v_i, v_k) $ by an edge $ (v_j, v_k) $.
\edefn
\noindent Clearly the operation is defined as $ \textbf{sub}(v(\textbf{successor}), v(\textbf{predecessor}))$.
\subsection{The Discrepancy and Proposed Correction}
We first outline the relevant steps of the proof provided in \cite{dubrova09}. While we adhere to the original notation in \cite{dubrova09} as much as possible, we label the first uniform NFSR as $ N_a $ and the second as $ N_b $, according to the respective terminal bits. The NFSR transformation is effected by the shift of a single product-term $ x_k x_a, \, k<a $, from $ g_a $ to $ g_b $. Express $ g_a $ as $ g_a = g_a^{\ast} \oplus x_k x_a$.\\
The strategy of Dubrova's proof may be stated as follows:
\smallskip

\emph{If the NFSR $ N_b $, with terminal bit $ b $, is uniform, its feedback graph is reducible to a single vertex $ v_b $. Then there exists a non-linear recurrence describing the sequence at $ v_b $. The equivalence between $ N_a, N_b $ is established if this recurrence is shown to be identical with that describing the sequence at $ v_a $ for $ N_a $.}\\

\noindent \textbf{Relevant Steps in Dubrova's Proof:}\\
Let $ \tau = a $ and $ \tau = b $ represent the respective terminal bits for $ N_a $ and $ N_b $. As outlined in the steps of the proof to Theorem 1 of \cite{dubrova09}, the substitution operation is invoked twice in arriving at the non-linear recurrence for each terminal bit $ \tau, \tau \in \{a,b\} $. For instance, for $ \tau =a $, the vertices $ i<a$ are first removed via the substitution sequence:
	\beqn  \text{sub}(v_0, v_{1}), \text{sub}(v_1, v_{2}), \ldots, \text{sub}(v_{a-1}, v_{a})
	\eeqn
Then the vertices $ v_i, i>\tau =a$ are removed as follows in Section IV of \cite{dubrova09}, page 5268 (we quote in italics):
\smallskip

``\emph{After a	sequence of $ n-a-1 $ substitutions $ \text{sub}(v_{n-1}, v_{n-2}), \ldots, \text{sub}(v_{a+1}, v_{a}) $, we get a non-linear recurrence describing the sequence of values of bit $ a $}
\begin{multline*}
s_a(t) = s_a (t-n)\oplus g_{n-1}(\tilde{s}_a(-n+a+1))\oplus g_{n-2}(\tilde{s}_a(-n +a))\\
\oplus \cdots \oplus g_a^{\ast}(\tilde{s}_a)\oplus s_a(t-1-a+k)s_a(t-1). \text{''}
\end{multline*}
The notation $ \tilde{s}_a(i) $ has been defined (cf. pages 5267, 5268) to mean that each element $ s_a(x) $ of $ \tilde{s}_a $ has been replaced by $ s_a(x+i) $, where $ \tilde{s}_a $ stands for 
\beq\label{eq:satilda}
\tilde{s}_a:= (s_a(t-a-1), s_a(t-a), \ldots, s_a(t-1-(a-b)))
\eeq
A similar step is described in arriving at the non-linear recurrence for the values of the bit $ b $.\\
\noindent \textbf{The Discrepancy}\\
We observe that:
	\begin{itemize}
		\item The vertices $ i>\tau $, for $\tau =a $ or $b$, are removed via the sequence:
			\beq\label{eqfaltu}
			\text{sub}(v_{n-1}, v_{n-2}), \ldots, \text{sub}(v_{\tau +1}, v_{\tau})
			\eeq 
			It is noted that the arguments of the terms in Equation (\ref{eqfaltu}) are not consistent with the definition of the substitution operation (cf. Definition \ref{defsubs}) as they have the form \[(v(\text{predecessor}),v(\text{successor}) ). \] 
		\item Even if the order of the arguments is reversed in each operation, the second substitution sequence fails to achieve the presumed objective of reducing all the remaining vertices to $ v_{\tau} $.
\end{itemize} 
However, a simple modification to the strategy of applying the substitution operation suffices to remove this discrepancy in the proof, as discussed below.
\smallskip

\noindent\textbf{The Correction}\\
We aver that the correct strategy to evaluate the non-linear recurrence for the sequence of values of the terminal bit is to apply the following sequence of substitutions, \emph{after} all the vertices labelled $ i< \tau $ have been removed:
\beqn
\text{sub}(v_{n-1}, v_{\tau}), \text{sub}(v_{n-2}, v_{\tau}),\ldots, \text{sub}(v_{\tau +1}, v_{\tau})
\eeqn
Changing the time-shifts of $ s_{\tau}(t) $ appropriately at each stage will result in identical non-linear recurrences for both $ \tau = a $ and $ \tau =b $, thereby proving the theorem.
\subsection{Proposed Correct Proof of Dubrova's Theorem}
We now present our proof of Theorem \ref{thmain} in accordance with the correct formulation proposed in the previous subsection. We have retained Dubrova's notation from \cite{dubrova09} as much as possible and elaborate only those steps where there are significant modifications.
\smallskip

\noindent{\textbf{Proof of Theorem \ref{thmain}:}}
\smallskip

We adhere to Dubrova's proof in the setting of the theorem as well as the relevant steps in \cite{dubrova09} prior to the discrepancy. The transformation of the uniform NFSR $ N_a $, with terminal bit $ a $, into $ N_b $, with terminal bit $ b <a$, is effected by the shift of $ x_k x_a, \, k<a $, from $ g_a $ to $ g_b $. As the transformed $ N_b $ is uniform by hypothesis, and $ b<a $, the feedback components $ g_{n-1}, g_{n-2}, \ldots, g_a^{\ast} $ of $ N_a $ depend on variables $ s_j $ with $ j\leq b $. Accordingly, the state-variable values $ s_{n-1}(t), \ldots, s_0 (t) $ of $ N_a $ are described in terms of the state-variable values $ s_i(t-1) $ and the functions $ g_j(s_0(t-1), s_1(t-1), \ldots, s_b (t-1)) $.\\
For instance, \beq\label{eqNastat} 
s_{n-1}(t) = s_0(t-1)\oplus g_{n-1} (s_0(t-1), s_1 (t-1),\ldots, s_b (t-1)) 
\eeq
and the equation for the terminal bit of $ N_a $ is: 
\beq\label{eqNalast}
s_a(t) = s_{a+1}(t)\oplus g_{a}^{\ast} (s_0(t-1), s_1 (t-1),\ldots, s_b (t-1))\oplus s_k(t-1) s_{a}(t-1)
\eeq
Successive application of the substitutions $ \text{sub}(v_i, v_{i+1}) $ from the stage $ i = 0$ to the stage $i= a-1 $, results in the expression of all the remaining state-variable values as follows (\cite{dubrova09}).
\begin{align*}
	s_{n-1}(t) &= s_{a}(t-a-1)\oplus g_{n-1}(s_a(t-a-1), s_a(t-a), \ldots, s_a(t-1- (a-b))) \\
	s_{n-2}(t) &=  s_{n-1}(t-1)\oplus g_{n-2}(s_a(t-a-1), s_a(t-a), \ldots, s_a(t-1- (a-b)))  \\
	\cdots & =\cdots \\
	s_a(t) & = s_{a+1}(t-1)\oplus s_a (t-1-a+k)s_a(t-1)\\
	& \oplus g_{a}^{\ast}(s_a(t-a-1), s_a(t-a), \ldots, s_a(t-1- (a-b))) 
\end{align*}
The above sequence of substitutions results in replacing all $ s_i (t), \, i< a,$ with time-shifted $ s_a(t) $'s. Our next step involves the \emph{proposed second substitution sequence} $ \text{sub}(v_i, v_{a}) $ from the stage $ i = n-1$ to the stage $i= a+1 $.\\
The initial substitution in the second sequence, $ \text{sub}(v_{n-1}, v_{a}) $, has already been computed at the end of the first substitution sequence. This is because, at the end of the first sequence, $ v_a $ is the unique predecessor of $ v_{n-1} $ and all the $ s_i(t-1) $ have been replaced by $ s_a(t-a-1+i) $ for $ i\leq b< a $. Using the notation $ \tilde{s}_a $ from Equation \ref{eq:satilda}, we have:
\beqn
s_{n-1}(t) = s_a(t-a-1)\oplus g_{n-1}(\tilde{s}_a)
\eeqn
Next we arrive at the expression for $ s_{n-2}(t) $ as follows.
\begin{align*}
	s_{n-2}(t)& = s_{n-1}(t-1)\oplus g_{n-2}(\tilde{s}_a)\\
	& = s_a(t-a-2)\oplus g_{n-1}(\tilde{s}_a(-1))\oplus g_{n-2}(\tilde{s}_a)
\end{align*}
\emph{Claim:} At the $ i $-th stage of the second substitution sequence, $ i< n-a $, the state-variable value expression is given by:
\beq\label{eqinducsub}
s_{n-i}(t) = s_a(t-i-a)\oplus g_{n-1}(\tilde{s}_a(-i+1))\oplus g_{n-2}(\tilde{s}_a(-i+2)) \oplus \ldots \oplus g_{n-i}(\tilde{s}_a)
\eeq
\emph{Proof of Claim:} Proved by induction on $ i $, where $1\leq i\leq n-a-1 $.\\
Assuming Equation (\ref{eqinducsub}) holds for some $ i =m-1 $, we have, for $ i =m $,
\[s_{n-m}(t) = s_{n-m+1}(t-1)\oplus g_{n-m}(\tilde{s}_a)\]
Substituting the formula for $ s_{n-m+1}(t-1) $ proves the claim.\\
\smallskip

\noindent Applying the above formula for $ i = n-a-1 $, we have
\begin{align*}
	s_{a+1}(t) & = s_a(t-(n-a-1)-a)\oplus g_{n-1}(\tilde{s}_a(-n+a +2)) \oplus \ldots \oplus g_{a+1}(\tilde{s}_a)\\
	& = s_a(t-n +1)\oplus g_{n-1}(\tilde{s}_a(-n+a +2)) \oplus g_{n-2}(\tilde{s}_a(-n+a +3)) \oplus \\
	& \hspace{0.2 in}\ldots \oplus g_{a+1}(\tilde{s}_a)
\end{align*}

\noindent Therefore
\begin{align*}
	s_a(t) & = s_{a+1}(t-1)\oplus s_a (t-1-a+k)s_a(t-1)\\
	& \oplus g_{a}^{\ast}(s_a(t-a-1), s_a(t-a), \ldots, s_a(t-1- (a-b)))\\
	& = s_a(t-n)\oplus g_{n-1}(\tilde{s}_a(-n+a+1))\oplus g_{n-2}(\tilde{s}_a(-n+a +2))\oplus \ldots\\
	& \oplus g_{a+1}(\tilde{s}_a(-1))\oplus g_{a}^{\ast}(\tilde{s}_a)\oplus s_a (t-1-a+k)s_a(t-1) 
\end{align*}
\noindent Expanding the abbreviated notation for $ \tilde{s}_a(-i) $ in each term we get:
\beq\label{eqsaful}
\begin{aligned}
	s_a(t) & = s_a(t-n)\\
	&\oplus g_{n-1}(s_a(t-n), s_a(t-n+1), \ldots, s_a(t-n+b))\\
	&\oplus g_{n-2}(s_a(t-n+1), s_a(t-n+2), \ldots, s_a(t-n+b+1))\oplus \ldots\\
	&\oplus g_{a}^{\ast}(s_a(t-a-1), s_a(t-a), \ldots, s_a(t-1-(a-b)))\\
	&\oplus s_a (t-1-a+k)s_a(t-1) 
\end{aligned}
\eeq
\smallskip

\noindent For the transformed NFSR $ N_b $, the state-variable values are given by the following set of equations (cf. \cite{dubrova09}):
\begin{align*}
	s_{n-1}(t) &= s_{0}(t-1)\oplus g_{n-1}(s_0(t-1), s_1(t-1), \ldots, s_b(t-1)) \\
	\ldots &= \ldots\\
	s_{a}(t) &=  s_{a+}(t-1) \oplus g_{a}^{\ast}(s_0(t-1), s_1(t-1), \ldots, s_b(t-1)) \\
	\ldots & =\ldots \\
	s_{a-1}(t) & = s_{a}(t-1)\\
	\ldots &= \ldots\\
	s_b(t) & = s_{b+1}(t-1)\oplus s_{k-a+b} (t-1)s_b(t-1)\\
	\ldots & = \ldots\\
	s_0(t)& = s_1(t-1)  
\end{align*}
It is noted that the product-term $ x_k x_a $ has been shifted from $ s_a(t) $ to $ s_b(t) $ with the appropriate change in the indices: $ k\rightarrow k-a+b, a\rightarrow b $.\\
\noindent Using the notation (\cite{dubrova09})
\[\tilde{s}_b:= (s_b(t-b-1), s_b(t-b), \ldots, s_b(t-1))\]
we obtain, as follows (cf. \cite{dubrova09}), a similar set of equations after the first sequence of substitutions $ \text{sub}(v_i, v_{i+1}) $, for $ i= 0, \ldots, b-1 $.
\begin{align*}
	s_{n-1}(t) & = s_b(t-b-1)\oplus g_{n-1}(\tilde{s}_b)\\
	\ldots & = \ldots\\
	s_{a}(t) & = s_{a+1}(t-1)\oplus g_{a}^{\ast}(\tilde{s}_b)\\
	s_{a-1}(t)& = s_a(t-1)\\
	\ldots & = \ldots\\
	s_b(t) & = s_{b+1}(t-1)\oplus s_b(t-1-a+k)s_b(t-1)
\end{align*}
where, in the last line, the argument in the first term of the product-term on the r.h.s. is obtained thus: $ t-1 -(b -(k-a+b)) = t-1-a+k $; i.e. it is \emph{implicitly assumed that $ k-a+b \leq b $}.\\

\noindent However, as demonstrated earlier for the NFSR $ N_a $, the second sequence of substitutions to compute the recurrence for the bit $ b $ should be
\beqn
\text{sub}(v_{n-1}, v_{b}), \text{sub}(v_{n-2}, v_{b}),\ldots, \text{sub}(v_{b +1}, v_{b})
\eeqn
With a development similar to Equation (\ref{eqinducsub}), we present the corrected form for $ s_b(t) $ as follows.
\begin{align*}
	s_b(t) & = s_b(t-n)\oplus g_{n-1}(\tilde{s}_b(-n+b+1)) \oplus g_{n-2}(\tilde{s}_b(-n+b +2))\oplus \ldots\\
	& \oplus g_{a}^{\ast}(\tilde{s}_b(-a+b))\oplus s_b(t-1-a+k)s_b(t-1) 
\end{align*}
In the recurrence for $ s_b(t) $, the term $ g_a^{\ast} $ is contributed from the stage $ a $, and so, the argument of $ g_a^{\ast} $ becomes $ \tilde{s}_b(-(a-b)) $. This is in contrast with the term $ g_a^{\ast}(\tilde{s}_a) $ which appears in the final stage of the recurrence for $ s_a(t) $. However, substituting the full form of $ \tilde{s}_b $ in the $ g_a^{\ast} $ term in $ s_b(t) $ yields:
\[g_{a}^{\ast}(s_b(t-a-1), s_b(t-a), \ldots, s_b(t-1-(a-b)))\]
Hence, expanding the notation for all the $ \tilde{s}_b(-i) $'s in the expression of $ s_b(t) $ we obtain:
\beq\label{eqsbful}
\begin{aligned}
	s_b(t) & = s_b(t-n)\\
	&\oplus g_{n-1}(s_b(t-n), s_b(t-n+1), \ldots, s_b(t-n+b))\\
	&\oplus g_{n-2}(s_b(t-n+1), s_b(t-n+2), \ldots, s_b(t-n+b+1))\oplus \ldots\\
	&\oplus g_{a}^{\ast}(s_b(t-a-1), s_b(t-a), \ldots, s_b(t-1-(a-b)))\\
	&\oplus s_b (t-1-a+k)s_b(t-1) 
\end{aligned}
\eeq
Comparing Equations (\ref{eqsaful}) and (\ref{eqsbful}), it is readily seen that the recurrence for the bit $ a $ in NFSR $ N_a $ is identical with that for $ b $ in NFSR $ N_b $.\\

\noindent This completes the proof.
\qed
\brmk
It is noted that the final expression of the recurrence for the bit $ a $ in the original NFSR, given as Equation (7) in \cite{dubrova09} (p. 5268), and its abbreviated form using the notation $ \tilde{s}_a $, are both \textbf{erroneous}. The argument of the term $ g_{n-2} $ is shown in Equation (7) to be:
\[s_a(t-n-1), s_a(t-n), \ldots, s_a(t-n+b-1)\] 
when the correct argument is (cf. Equation \ref{eqsaful}):
\[s_a(t-n+1), s_a(t-n+2), \ldots, s_a(t-n+b+1)\]
The same error in the argument is present in the corresponding expressions for the bit $ b $ in the transformed NFSR (Equation (10), p. 5268 in \cite{dubrova09}) as well, and the correct expression is presented here in Equation (\ref{eqsbful}).
\ermk
To conclude, we have established that there are serious errors in the proof of E. Dubrova's theorem in \cite{dubrova09} and presented a correct proof. It is important to note that the statement of the theorem, which is a key result in numerous subsequent works by Dubrova and others, has been verified to be true - only the validation was flawed. In fact, the idea of transforming a uniform Galois NFSR via shifts of feedback functions and relating the notion of equivalence of NFSRs with equality of recurrence relations at designated stages have proved to be a powerful toolkit for analysis of shift register-based ciphers. 
\section{Unifying Dubrova and Yao-Parampalli Transformation Frameworks}\label{secyapdubri}
As indicated in the introductory section, there is a strong connection between the transformation algorithms defined in the works of Dubrova (\cite{dubrova09}, \cite{dubrova10}, \cite{dubrova14}) and those of Yao and Parampalli (\cite{yapa21},\cite{yapa22}). In this section, we attempt to formalize these connections in the context of transformations between Fibonacci and uniform Galois NFSRs. The proofs of most of the results used in this section have been presented earlier in \cite{space25}.
\subsection{Unifying Dubrova's Algorithms}\label{seceqdubri}
In this subsection we state the evolutionary relation among the transformation algorithms discussed in \cite{dubrova09} and \cite{dubrova14}. We restrict our discussion to the type of NFSRs featured in the Espresso proposal as captured in the following definition.
\bdefn\label{defstrict}
A uniform NFSR is termed a \emph{strict-terminal uniform NFSR}, denoted st-uniform NFSR, if it satisfies the following condition for all stages indexed by $ i>\tau $:
		\beqn
		\text{max\_index}(g_i)< \tau
		\eeqn
where $\tau$ is the terminal bit.
\edefn
For instance, the Espresso NFSR $G$ has terminal bit $\tau = 193$ with the largest $\text{max\_index}(g_i) = 189 $, while the Espresso alternate NFSR $F$ has terminal bit $\tau =217$, with $\text{max\_index}(g_i)\leq 183$. Hence they are both st-uniform NFSRs.\\
%In what follows we show that:
%\smallskip
%
%\emph{Dubrova's Fibonacci-to-uniform Galois transformation algorithm \cite{dubrova09} D1, restricted to st-uniform Galois NFSRs, is subsumed into the framework of the Galois-to-Galois algorithm D2 \cite{dubrova14}.}
%\smallskip

The first algorithm D1 in \cite{dubrova09} effects a transformation between a uniform Fibonacci to a uniform, fully shifted Galois NFSR, which shifts all the monomials from the stage $ n-1 $ to a stage with index greater than or equal to the terminal bit $ \tau $. The rules of this shifting are given in \cite{dubrova09} as follows.\\
Let $ \text{max\_index}(m) $ denote the largest index of variables in $ \text{dep}(m) $; likewise $ \text{min\_index}(m) $ denotes the smallest index. For all $ m \in S_{g_{n-1}} $, where $ S_{g_{n-1}} $ is the set of monomials of $ g_{n-1} $,
\begin{enumerate}
	\item If $ \text{min\_index}(m) \leq (n-1) - \tau $, then the monomial $ m $ is shifted as
	\[g_{n-1} \xrightarrow{m} g_{n-1-\text{min\_index}(m)}\]
	\item If $ \text{min\_index}(m) > (n-1) - \tau $, then the monomial is shifted to $ g_{\tau} $.
\end{enumerate}
The following result from \cite{space25} shows that the above method in the D1 algorithm for generating a fully shifted st-uniform Galois NFSR from a (uniform) Fibonacci NFSR, which preserves the output sequence at the $ 0 $-th stage, is consistent with a valid shifting in the D2 algorithm \cite{dubrova14}.
\begin{theorem}\label{thmdubridubri}
	Let $ F $ be a Fibonacci NFSR with feedback functions of the form given in Equation \ref{eqfnuni}, with the set of feedback monomials at the $ (n-1) $-th stage denoted by $ \{x_0\}\cup S_{g_{n-1}} $. Then the shift of any monomial $ m \in S_{g_{n-1}} $ under the algorithm D1 for st-uniform NFSRs is a valid shifting in the sense of the algorithm D2.
\end{theorem}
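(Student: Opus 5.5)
We verify the three conditions of Definition \ref{defvs} directly for the shifting $g_{n-1}\xrightarrow{m} g_j$ prescribed by D1, with output stage $z=0$. Here $i=n-1$, and $j=n-1-\text{min\_index}(m)$ in case 1 or $j=\tau$ in case 2. Under Definition \ref{defshft}, an index $a\in\text{dep}(m)$ becomes $b=a-(n-1)+j$, so $b\le a$ and $[a,b]=\{a,a-1,\ldots,b\}$. Case 1 gives $b=a-\text{min\_index}(m)\ge 0$; case 2 gives $b=a-(n-1-\tau)\ge 1$. In both cases the shifted monomial has nonnegative indices and no wraparound modulo $n$ occurs, so the intervals are ordinary integer intervals. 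This is exactly what the D1 rule is designed to ensure.

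The starting configuration is a Fibonacci NFSR, so $g_c=0$ for every $c\neq n-1$, and $g_{n-1}^{\ast}=g_{n-1}\oplus m$. Monomials are shifted one at a time, and at an intermediate step some earlier monomials may already sit at stages $\ge\tau$. I would therefore argue inductively over the D1 ordering: every previously shifted monomial lands at a stage $\ge \tau$, has $\text{max\_index}<\tau$ by the st-uniform hypothesis on the final register, and every still-unshifted monomial sits in $g_{n-1}$ with indices $\le \text{max\_index}$ of the Fibonacci feedback.

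\textbf{Condition 1.} Take $c\in[a,b]\setminus\{n-1\}$. We need either $g_c=0$ or, when $c=n-1\in[a,b]$, that $g_{n-1}^{\ast}=0$. Since $a\le n-1$, the index $n-1$ lies in $[a,b]$ only if $a=n-1$, which would put $x_{n-1}$ into a Fibonacci feedback monomial; uniformity excludes this, since $(i+1)_n\notin\text{dep}(g_i)$ gives $0\notin\text{dep}(g_{n-1})$ only, so I expect instead to use $\text{max\_index}(g_{n-1})<\tau$ inherited from the target being st-uniform, whence $a<\tau$. It then remains to show $g_c=0$ for $b\le c\le a<\tau$. This holds because all nonzero $g_c$, both the original $g_{n-1}$ and the previously shifted ones, live at stages $c\ge\tau$.

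\textbf{Condition 2.} Take $k\in[n-1,j]$, so $k\ge j\ge\tau$ in case 2. In case 1, $j=n-1-\text{min\_index}(m)\ge\tau$ by the case hypothesis. We need $k\notin\text{dep}(g_{n-1}^{\ast})$ and $k\notin\text{dep}(g_p)$ for every $p$. Every nonzero $g_p$ has $\text{max\_index}<\tau\le j\le k$, so this is exactly where the strict inequality in Definition \ref{defstrict} is used. With merely $\le\tau$, the index $k=\tau=j$ could fail.

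\textbf{Condition 3.} With $z=0$, neither interval may contain both $0$ and $n-1$. The interval $[n-1,j]$ contains $n-1$ but not $0$, because $j\ge\tau\ge 1$ when $\tau>0$; the degenerate case $\tau=0$ should be handled separately or excluded. The interval $[a,b]$ lies in $[0,\tau-1]$, so it does not contain $n-1$.

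I expect the main obstacle to be Condition 1 together with the bookkeeping of the intermediate registers in the sequential induction. In particular, I must justify that $\text{max\_index}(m)<\tau$ holds for every Fibonacci monomial, reading this off from the st-uniformity of the D1 output since shifting preserves relative index offsets only partially. I must also confirm that no previously shifted monomial occupies a stage in $[b,a]$.
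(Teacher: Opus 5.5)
Your argument rests on two claims that are false. The first is that st-uniformity of the D1 output gives $\text{max\_index}(g_{n-1})<\tau$ for the original Fibonacci feedback, so that $a<\tau$. The second is that every nonzero $g_p$ met during the process has $\text{max\_index}<\tau$. Definition \ref{defstrict} bounds only the \emph{shifted} indices $b=a-(n-1-j)$ of monomials placed at stages $p>\tau$. It says nothing about:
\begin{itemize}
\item the original indices $a$, which can be as large as $n-1$;
\item the not-yet-shifted monomials still sitting in $g^\ast_{n-1}$ with their original indices;
\item $g_\tau$, which may contain $x_\tau$.
\end{itemize}
As a result, all three verifications collapse. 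Condition 1 needs $[a,b]$ to avoid every occupied stage. Condition 2 needs $\text{dep}(g^\ast_{n-1})\cap[n-1,j]=\emptyset$. Your Condition 3 claim that $[a,b]\subseteq[0,\tau-1]$ is also wrong. Leaving the ``D1 ordering'' unspecified does not help, because Condition 2 depends on exactly which monomials remain in $g^\ast_{n-1}$.

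Concretely, take $n=10$ and $f_9=x_0\oplus x_1x_2\oplus x_5x_9$. Both monomials fall under the first D1 rule for $\tau=4$, since their minimal indices $1$ and $5$ are at most $n-1-\tau=5$. D1 therefore gives $f_8=x_9\oplus x_0x_1$ and $f_4=x_5\oplus x_0x_4$. This register is st-uniform with terminal bit $4$, since $\text{max\_index}(g_8)=1<4$, yet $a=9>\tau$.

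Now consider the shift $g_9\xrightarrow{x_5x_9}g_4$ with $a=9$. Then $b=4$ and $i=9\in[9,4]$, so Condition 1 of Definition \ref{defvs} demands $g^\ast_9=0$ and $g_c=0$ for $4\le c\le 8$.
\begin{itemize}
\item If this shift is made first, it fails, because $g^\ast_9=x_1x_2$.
\item If it is made after $x_1x_2$ has moved, it fails, because $g_8=x_0x_1$.
\item Moving $x_1x_2$ first instead violates Condition 2, because $9\in\text{dep}(g^\ast_9)$.
\end{itemize}
So, reading Definition \ref{defvs} literally, no ordering makes every D1 shift valid in this example. The gap cannot be closed by more careful bookkeeping. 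Before an argument along your lines can succeed, you need an additional hypothesis or a different reading of the validity conditions. One candidate: $n-1\notin\text{dep}(g_{n-1})$, together with the requirement that the other monomials of $g_{n-1}$ have all their original indices below the target stage $j$ of each shift.
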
 
\brmk Theorem \ref{thmdubridubri} shows that for Fibonacci to st-uniform Galois transformation, D1 in \cite{dubrova09} is subsumed in the framework of D2 in \cite{dubrova14}. Hence the post-transformation equivalence conditions (cf. Theorem \ref{thdubri2}) that are satisfied for D2 will hold for D1 as well.
\ermk
\subsubsection{Dubrova's Algorithms and the Espresso Equivalence Problem}\label{subseep} 
As discussed in the introductory section, an alternate Galois NFSR $F$ is defined in \cite{dubrova17} to discuss the security of Espresso, which is claimed to be ``equivalent'' to the cipher design. However, we point out that there are two aspects to any meaningful equivalence, in the context of security analysis, of the two NFSR configurations so defined.
\begin{enumerate}
	\item Establishing the equivalence of the generated keystreams at designated stages of the $ 2 $-tap Galois NFSR $ F $ and the Espresso NFSR $ G $.
	\item Defining a corresponding non-linear output function for $ F $, so that both systems generate \emph{identical keystreams}. No such function is provided in \cite{dubrova17}.
\end{enumerate}
These unresolved issues constitute the \emph{Espresso Equivalence Problem}, graphically depicted in Figure \ref{fig:dubrigupi}.
\begin{figure}[H] 
	\centering
		
	\tikzstyle{block} = [draw, rectangle, rounded corners, inner sep=6pt]
	\tikzstyle{joint} = [draw, circle,minimum size=1em]
	
	\begin{tikzpicture}[>=stealth, auto, node distance= 2.8 cm]
	% Place nodes
	\node [block] (G)[draw, align=center]{Espresso (Galois) NFSR $ G $\\ $14$ non-linear taps, 256 stages\\
		$20$-variable Output Function};
	\node [block, right=of G] (F)[draw, align=center] {Alternate (Galois) NFSR $ F $\\$2$ non-linear taps, 256 stages\\
	No Output Function defined};
%	\node [coordinate, right=of outfork] (output) {};
%%	\node [coordinate, above=of system] (disturbances) {};
%	\node [block, below=of system] (model) {Model};
%	\node [joint, right=of model, anchor=center,label={[shift={(2mm,-1mm)}]-},label={[shift={(-3mm,-5.5mm)}]\tiny +}] (sum) {};
%	
%	\node [coordinate, right=of sum] (error) {};

	% Connect nodes

	\draw [dotted, <->, line width= 0.54 mm] (G) -- node{Equivalent?} (F);
%	\draw [->] (system) -- node {\(t'\)} (output);
%	\draw [->] (model) -- node {\(y\)} (sum);
%	\draw [->] (sum) -- node {\(\epsilon\)} (error);
%	\draw [->] (infork) |- node [anchor=south west] {\(u\)} (model);
%	\draw [->] (outfork) -| (sum.north) node [very near end] {\(t\)};
	\end{tikzpicture}
			
	\caption{The Espresso Equivalence Problem}
	\label{fig:dubrigupi}
\end{figure}

\noindent In \cite{dubrova17}, the only pointer to establishing the claimed equivalence is a reference, on p. 278, to Dubrova's algorithm D1 in \cite{dubrova09} which we quote below:\\

\noindent \emph{``The equivalence of G and F can be shown by applying the Fibonacci-to-Galois transformation [6].''}\\

\noindent In the light of Theorem \ref{thmdubridubri}, we now comment on the applicability of Dubrova's algorithms to the Espresso equivalence problem using the following Table 
\ref{tabdubridubri} (cf. \cite{space25}), which displays the relevant information culled from Table \ref{tablepam} in place for convenience. It is noted that Dubrova's first algorithm in \cite{dubrova09}, while introducing several key notions of the shift formula, uniform NFSR configuration and associated terminal bit, does not provide explicit formulation for matching initial state. Further, it guarantees equivalence across the transformation only for the output at a single stage (e.g. stage $0$). Hence, it has been argued in \cite{space25} that, contrary to the claim in \cite{dubrova17}, the algorithm D1 cannot be used to resolve the Espresso equivalence problem, while it may be possible to use the algorithm D2 to gain further insights into the problem. Further analysis in later sections confirms this initial observation.
\begin{table}
    \centering
    \begin{tabular}{|c|p{3cm}p{2.5cm}p{2.5cm}|}
    \hline
       \textbf{Algorithm}  & \textbf{NFSR Types} & \textbf{Multi-stage \newline Equivalence}  & \textbf{Matching \newline Initial States}  \\
       \hline
        D1 \cite{dubrova09} & Fibonacci to \newline Uniform Galois & No & No \\
        & &  &  \\
         D2 \cite{dubrova14}& Galois to Galois  & Yes & Yes \\
         \hline
    \end{tabular}
    \vspace{0.5 cm}
    \caption{Comparison of Dubrova's NFSR Transformation Algorithms}
    \label{tabdubridubri}
\end{table}
    \vspace{-0.9 cm}
\subsection{Relating Yao-Parampalli and Dubrova's Algorithms}
As discussed in Subsection \ref{subs:yap}, the Yao-Parampalli transformation framework, henceforth denoted as the Y-P framework, analyzes four configurations of Galois NFSRs and back-and-forth transformations with ``equivalent'' Fibonacci NFSRs, when feasible. However, the only proven technical results in \cite{yapa22} deal with the transformation of a Fibonacci NFSR to a uniform Galois NFSR, which is the precisely the case treated in \cite{dubrova09}. Accordingly, we compare the Y-P Fibonacci-to-uniform Galois algorithm with Dubrova's generalized transformation D2 restricted to this particular case.\\
The three proven results in \cite{yapa22} are as follows:
\begin{itemize}
	\item In Section 2 of \cite{yapa22}, Lemma $ 1 $ and Lemma $ 2 $ relate the internal states of an ``original Fibonacci NLFSR'' with that of the ``transformed Type-II Galois NLFSR''.
	\item The third result is Theorem 1 in Section 3 of \cite{yapa22}, which states that for ``an $ n $-bit Fibonacci NLFSR transformed to a Type-II Galois NLFSR according to the Fibonacci-to-Type-II algorithm'', the internal states are related by the relations in Lemma 1 and Lemma 2, and, further, the NFSRs are ``equivalent''.
\end{itemize}  
It was proved in \cite{space25} that matching the initial states before and after the transformation in the Y-P framework yields the same result as outlined in Dubrova's algorithm D2. In the following, we briefly state the relevant results in \cite{space25} and further extend the analysis.
\subsubsection{Matching Initial States}
We begin by stating the two lemmas, proved in \cite{yapa22}, and compare them with Dubrova's Theorem \ref{thdubri2}. For facilitating this comparison, we use the following\\
\emph{Notation}: Denote the $ k $-th stage at time $ t $ of the transformed uniform Galois NFSR as $ r_k^t $, and let $ s_k^t $ denote the corresponding entity for the Fibonacci NFSR. The complete states of the said NFSRs at time $ t $ are denoted with the boldface $ \rb^t $ and $ \bfs^t $, respectively.\\
The lemmas establish that for the shift of a single monomial, using Dubrova's shift formula in \cite{dubrova09}, a relation between the initial states of the Fibonacci and the transformed uniform Galois NFSRs is preserved for all subsequent states.
\blem[Lemma 1, \cite{yapa22}]\label{lemlyapa1}
Given an $ n $-bit Fibonacci NFSR, a monomial $ m $ is shifted from $ f_{ n-1} $ to $ f_b $ according to the shift operation in Definition \ref{defshft} and the resulted feedback functions satisfy the conditions for uniform Galois NFSRs. If the initial state of the uniform Galois NFSR is computed according to (\ref{eqlyapa1}), then the internal state of the uniform Galois NFSR $ r^t_k, k \in
\{0, \ldots, n-1\} $ and the internal state of the Fibonacci NFSR $ s^t_k, k \in
\{0, \ldots, n-1\} $ also satisfy this relation for $ t > 0 $.
\begin{align}\label{eqlyapa1}
\begin{split}
r_k & = s_k \oplus m |_{-(n-k)}(\bfs^t),\,\,\text{for}\,\, k \in [b+1, n-1]\\
r_k & = s_k,\,\,\text{for}\,\, k \in [0,b]
\end{split}
\end{align}
\elem
\smallskip
\noindent The second lemma of the Y-P framework, which features iterative compensation, is stated as follows.
\blem[Lemma 2, \cite{yapa22}]\label{lemlyapa2}
In the setting of Lemma \ref{lemlyapa1},
\begin{align}\label{eqlyapa2}
	\begin{split}
		s_k & = r_k \oplus \overline{m |_{-(n-k)}}(\rb^t),\,\,\text{for}\,\, k \in [b+1, n-1]\\
		s_k & = r_k,\,\,\text{for}\,\, k \in [0,b]
	\end{split}
\end{align}
where $ \overline{m |_{-(n-k)}} $ is $ m $ compensated by $ C =  [0, \dots, 0, m|_{-(a-b)} , \ldots, m|_{-1} , 0,\ldots , 0] $ iteratively in descending order.
\elem
The following result in \cite{space25} established the equivalence of matching the initial states in Y-P and Dubrova's formulations by proving the equivalence of Lemma 1 of \cite{yapa22} with Dubrova's theorem in \cite{dubrova14}.
\blem\label{lemyapchor1}
For a Fibonacci-to-uniform Galois transformation, the formulation in \cite{yapa22} for the Galois NFSR stages $ r_k^t $ in terms of the Fibonacci stages $ s_k^t $, $ k \in
\{0, \ldots, n-1\} $, is simply obtained by a restriction of the formulation in Dubrova's Theorem in \cite{dubrova14} to this particular case.
\elem
It is noted that even though the results of matching the initial states are identical, when the transformed initial state is expressed in terms of the original state, there is no obvious counterpart of Lemma \ref{lemlyapa2} in Dubrova's general framework. Further, the Y-P framework introduces the compensation operation to arrive at the results, which has no apparent equivalent in Dubrova's works. In the next subsection, we take a closer look at the actual analyses behind these results and reveal a stronger connection between the Dubrova and Y-P frameworks.
\subsection{De-mystifying Compensation}\label{subscompenis}
In this subsection we first examine the two compensation operations defined in the Y-P framework in the context of Fibonacci-uniform Galois transformations, and show that they lead to the evaluation of the same matching initial states, albeit from different directions. Then we prove that the simple compensation operation is a re-formulation of a result from Dubrova's framework \cite{dubrova10}, which allows us to relate these two frameworks with greater clarity.
\subsubsection{Simple vs. Iterative Compensation} It is evident from the statements of Lemma \ref{lemlyapa1} and Lemma \ref{lemlyapa2} (cf. Lemma 1 and Lemma 2 of \cite{yapa22}) that the second lemma includes the first in its premise. We further make the following formal observation, which is obtained from the steps of the proof of the second lemma presented in \cite{yapa22}.
\bprop
For the Yao-Parampalli Fibonacci-to-uniform Galois transformation, the initial states before and after transformation can be matched in either of two ways, leading to identical results, as follows:
\begin{enumerate}
	\item Start with the initial state of the original Fibonacci NFSR and perform simple compensation to obtain the matching initial state of the transformed Galois NFSR;
	\item Start with the initial state of the transformed Galois NFSR and perform iterative compensation to obtain the matching initial state of the original Fibonacci NFSR.
\end{enumerate}
\eprop
\begin{proof}
It follows from Equation \ref{eqlyapa1} and Equation \ref{eqlyapa2} that for indices $k \in [0,b]$, where $k = b+1$ is the first possibly non-zero index in the compensation list, both the procedures return identical values by default.\\
For any $k\in [b+1, n-1]$, consider the expression for the $k$-th initial bit of the Fibonacci NFSR in terms of the corresponding bit of the transformed Galois NFSR obtained from Equation \ref{eqlyapa2}:
\[	s^0_k = r^0_k \oplus \overline{m |_{-(n-k)}}(\rb^0)\] 
Here, the overline denotes iterative compensation using the initial state $\rb^0$ of the Galois NFSR. But, as Lemma 2 pre-supposes Lemma 1, we substitute the value of $r^0_k$ from Equation \ref{eqlyapa1} in the above to obtain:
\beqn
s^0_k = (s^0_k \oplus m |_{-(n-k)}(\bfs^0))\oplus \overline{m |_{-(n-k)}}(\rb^0)
\eeqn
which readily implies 
\beq\label{eqcompenis}
	m |_{-(n-k)}(\bfs^0) = \overline{m |_{-(n-k)}}(\rb^0)
\eeq
 This proves the assertion, as simple compensation using the initial state of the Fibonacci NFSR yields exactly the same values as iterative compensation using the state of the transformed Galois NFSR.
\end{proof}
\qed
\brmk
It is noted that the proof of Lemma 2 in \cite{yapa22} essentially consists of proving the relation in Equation \ref{eqcompenis} (cf. pp. 235-236 \cite{yapa22}).
\ermk
 
\subsubsection{Simple compensation and Dubrova's matching formula in \cite{dubrova10}}
As discussed in Section \ref{shegoldubri}, formulas for matching initial states in transformations between Fibonacci and uniform Galois NFSRs were presented in \cite{dubrova10}, \cite{dubrova14}. Specifically, Theorem \ref{thm:compendubri} in \cite{dubrova10}, gives the explicit formula for the initial state of a uniform Galois NFSR in terms of the initial state of an ``equivalent'' Fibonacci NFSR. In the following we show that, in this case, simple compensation in the Y-P framework is a re-formulation of the above formula in \cite{dubrova10}.
\bthm\label{thm:chorcompenis}
Let $N_F$ and $N_G$ be an equivalent pair, in the sense of matched outputs at stage $0$, of an $n$-bit Fibonacci and a uniform Galois NFSR, respectively. Then matching the initial state $\bfs= (s_0, \ldots, s_{n-1})$ of $N_F$ to that of $N_G$, denoted $\rb = (s_0, s_1, \ldots, s_{\tau}, r_{\tau+1}, \ldots,r_{n-1} )$, through the Yao-Parampalli simple compensation operation yields the formula:
\[r_k = s_k \oplus \bigoplus_{i=\tau}^{k-1}g_{i}|_{+(k-(i+1))}(\bfs)\]
which matches exactly with the corresponding formula for $r_k, \, \, \tau +1\leq k \leq n-1,$ in (Dubrova's) Theorem \ref{thm:compendubri}.
\ethm
\begin{proof}
	Consider the case where the uniform Galois NFSR $N_G$ has terminal bit $\tau$, where $0\leq\tau< n-1$, and feedback functions of the form:
	\begin{align*}
		f_k(x) & = x_{(k+1)\!\!\!\!{\mod n}}\oplus g_k (x_0, \ldots, x_{\tau}); \,\, k \in [\tau, n-1]\\
		f_k(x) & = x_{(k+1)}; \,\, k\in [0,\tau -1]
	\end{align*}
	Hence, the transformation entails the \emph{shift} of all feedback functions, which themselves consist of XORed monomial terms, from the single Fibonacci feedback function at $k = n-1$ of $N_F$. The terminal bit being $\tau$, the first possibly non-zero entry in the compensation list corresponds to the shift from $n-1$ to $\tau$, with the \emph{shifted form} being $g_{\tau}$. We denote the original component of the feedback at the $(n-1)$-th stage of $N_F$ as $g^{\ast}_{\tau}$, which is related to the shifted feedback function $g_{\tau}$ of $N_G$ as follows:
	\[g_{\tau} = g^{\ast}_{\tau}|_{-(n-1-\tau)} \implies g^{\ast}_{\tau} = g_{\tau}|_{+(n-1-\tau)}\]
	The compensation list corresponding to the shift $f_{n-1}\xrightarrow{g^{\ast}_{\tau}}f_{\tau}$ is given by:	
	\[C_{\tau} =[0,\ldots, 0, \underbrace{g^{\ast}_{\tau}|_{-(n-1-\tau)}}_{\tau +1}, g^{\ast}_{\tau}|_{-(n-2-\tau)}, \ldots, \underbrace{g^{\ast}_{\tau}|_{-1}}_{n-1}]\]
	Substituting $g_{\tau}$ for $g^\ast_{\tau}$ in the above, we have:
	\[C_{\tau} =[0,\ldots, 0, \underbrace{g_{\tau}}_{\tau +1}, g_{\tau}|_{+1}, \ldots, \underbrace{g_{\tau}|_{+(n-1-(\tau +1))}}_{n-1}]\]
	which gives the simple compensation contribution involving $g_{\tau}$ in the combined compensation list: 
	\[C = C_{\tau} \oplus C_{\tau +1}\oplus \cdots \oplus C_{n-1}\]
	It can be similarly verified that the compensation contribution involving $g_{k}$, for all $\tau +1 \leq k \leq n-1$, is given by:
	\[C_{k} =[0,\ldots, 0, \underbrace{g_{k}}_{k +1}, g_{k}|_{+1}, \ldots, \underbrace{g_{k}|_{+(n-1-(k +1))}}_{n-1}]\]
	The first possibly non-zero entry of the combined compensation list is the entry in the position $\tau +1$, given by the single term $g_{\tau} $.\\
	The entry in the position $\tau +2$ is accordingly given by: $g_{\tau +1}\oplus g_{\tau}|_{+1}$.\\
	Collecting the terms in a similar manner, the $k$-th entry in the combined compensation list is given by:
	\[g_{k-1}\oplus g_{k-2}|_{+1}\oplus \cdots \oplus g_{\tau}|_{+(k-(\tau +1))}= \bigoplus_{i=\tau}^{k-1}g_{i}|_{+(k-(i+1))}\]
	Simple compensation using this combined compensation list yields the $k$-th bit of the initial state of $N_G$, say $r_k$, in terms of the initial state $\bfs$ of $N_F$ as:
	\[r_k = s_k \oplus \bigoplus_{i=\tau}^{k-1}g_{i}|_{+(k-(i+1))}(\bfs).\]
	Hence the theorem.	
\end{proof}
\qed
We summarize our analysis of the transformation frameworks of Dubrova and Yao-Parampalli with the following\\

\noindent \textbf{Observations:}
\begin{enumerate}
	\item For Fibonacci-to-uniform Galois NFSR transformation, the simple compensation operation of Yao-Parampalli yields the same result as Dubrova's formulation for matching initial states to achieve NFSR equivalence.
	\item In the Y-P framework for Fibonacci-uniform Galois transformation, simple compensation and iterative compensation yield identical pairs of matched initial states. However, simple compensation computes the transformed state in terms of the original state while iterative compensation proceeds in the reverse direction. 
	\item Therefore, with regard to equivalent transformations between Fibonacci and uniform Galois NFSRs, Dubrova's general framework (\cite{dubrova10}, \cite{dubrova14}) and the Y-P framework are essentially identical in computing the monomial shifts and matching the initial states. 
	\item However, as will be elaborated on in Section \ref{secbom}, the real efficacy of iterative/simple compensation is demonstrated in matching the outputs of non-linear functions, involving several state variables of the NFSRs, before and after the transformations. This contribution in the Y-P framework appears to represent a significant step beyond Dubrova's framework.
\end{enumerate}

\section{Resolving the Espresso Equivalence Problem}\label{secbom}
In this section we bring to bear our unifying analysis of Dubrova and Y-P transformation techniques on the Espresso Equivalence Problem (EEP) (cf. Subsection \ref{subseep}). As mentioned earlier, Yao and Parampalli have claimed in \cite{yapa21} that the application of \emph{uniform Galois-to-Fibonacci} NFSR transformation on the Espresso NFSR $ G $ results in a Fibonacci LFSR with feedback function
\beq\label{eqyapfol}
f_L(\xb):= {f}_{255}(x) = x_0 \oplus x_{12} \oplus x_{48} \oplus x_{115} \oplus x_{133} \oplus x_{213}
\eeq
This also generates an output function, which we denote as $ z_{\text{YP}}(\xb) $, consisting of $ 2289 $ monomials and having algebraic degree $ 12 $, such that the ensemble comprising LFSR $ f_L(\xb) $ and $ z_{\text{YP}}(\xb) $, with initial state matched with $ G $, produces \emph{the same keystream as Espresso}. This is indeed a fascinating result, and it showcases the power of Dubrova-Yao-Parampalli transformation framework by reducing a Galois NFSR with $14$ non-linear taps to a Fibonacci LFSR whose single feedback consists of $6$ linear terms. The Y-P LFSR-based Espresso-equivalent design is the springboard for our attempt to resolve EEP.\\
Thus far, we have shown that the NFSR transformation effected by the Y-P Fibonacci-to-uniform Galois method is essentially the same as Dubrova's algorithm D2. In \cite{space25}, the Fibonacci-to-uniform Galois algorithm was applied on the Y-P LFSR $ (f_L(\xb)) $ to attempt a transformation to the Espresso alternate NFSR $ F $. The rationale for this was given in \cite{space25} as follows:\\

\noindent \emph{If the NFSR $ F $ is recovered, then, given the equivalence claimed in \cite{yapa21}, the output function obtained using the Y-P algorithm on the LFSR ensemble would complete the gap in the Espresso NFSR $ F $ design.}\\

\noindent A concept schematic of the above strategy is given in Figure \ref{fig:akbebhul}.
\begin{figure}[t] 
	\centering
	\includegraphics[width= \textwidth]{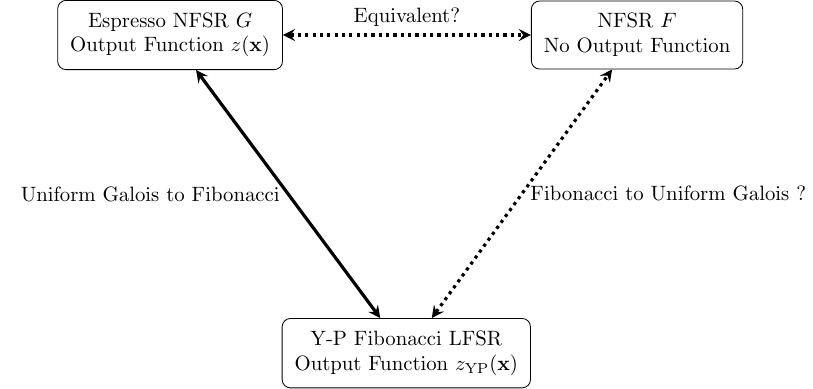}		
	\caption{Using Y-P/Dubrova Algorithms to Resolve Espresso Equivalence}
	\label{fig:akbebhul}
\end{figure}

\subsection{Recovery of Espresso NFSR $ F $ from Y-P LFSR}\label{subs:lofornofor}
We now present, for the sake of completeness, the steps of the recovery of the Galois NFSR $F$ by the application of the Fibonacci-to-uniform Galois algorithm on the Fibonacci LFSR $ f_L(\xb) $ as given in \cite{space25}.\\
We first note that the NFSR $ F $ has two feedback taps given by:
	\begin{align}\label{eqfespes}
\begin{split}
f_{255}(\xb) = & x_{0}\oplus x_{12}\oplus x_{48}\oplus x_{115}\oplus x_{133}\oplus x_{213}\oplus x_{41}x_{70}\oplus x_{46}x_{87}\\
&\oplus x_{52}x_{110}\oplus x_{55}x_{130}\oplus x_{62}x_{157}\oplus x_{74}x_{183}\oplus x_{87}x_{110}x_{130}x_{157}\\
f_{217}(\xb) = & x_{218}\oplus x_{3}x_{32}\oplus x_{8}x_{49}\oplus x_{14}x_{72}\oplus x_{17}x_{92}\oplus x_{24}x_{119}\oplus x_{36}x_{145}\\
&\oplus x_{49}x_{72}x_{92}x_{119}
\end{split}
\end{align}
It is remarked in \cite{dubrova17} that  
\beq\label{eqfor255}
f_{255}(\xb)= f_{\text{Lin}}(\xb)\oplus f_{\text{NL}}(\xb)
\eeq
where $ f_{\text{Lin}}(\xb) $ is a linear polynomial component and $ f_{\text{NL}}(\xb) $ is a non-linear component. Further the feedback function at stage $ 217 $ is given precisely by
\[f_{217}(\xb)= x_{218}\oplus f_{\text{NL}}(\xb)_{(i-38)}\]
where the second term is given by the non-linear component $ f_{\text{NL}}(\xb) $ of $ f_{255} $, with every variable $ x_i $ changed to $ x_{i-38} $.\\

\noindent The feedback function of the Y-P Fibonacci LFSR can be re-written as 
\[ f_{L}(\xb)= (f_L(\xb)\oplus f_{\text{NL}}(\xb)) \oplus f_{\text{NL}}(\xb) \]
Using Dubrova's shift formula to transfer all the monomials in $ f_{\text{NL}}(\xb) $ from stage $ 255 $ to stage $ 217 $, we obtain the transformed feedback function at stage $ 217 $ as:
\[f_{217} = x_{218}\oplus f_{\text{NL}}(\xb)_{(i - 255 +217)} = x_{218}\oplus f_{\text{NL}}(\xb)_{(i-38)}\]
After the shift, the modified feedback function at stage $ 255 $ is \[f_{255}(\xb)= f_L(\xb)\oplus f_{\text{NL}}(\xb) \]
This is precisely the Galois NFSR $ F $ with exactly two non-linear feedback taps at stages $ 255 $ and $ 217 $.
\brmk
A general analysis and application of this `trick' of extracting and shifting an `invisible' feedback term by XORing twice is to be found in \cite{dubrova12}.
\ermk
\subsubsection*{Proposed Resolution of Espresso Equivalence Problem} As the NFSR $F$ is a uniform Galois NFSR, with terminal bit $ 217 $, the Y-P Fibonacci-to-uniform Galois algorithm will automatically generate a candidate for the non-linear output function. Hence, we can construct a non-linear filter generator cipher based on the Espresso alternate NFSR $ F $, which would be \emph{equivalent} to the Y-P LFSR ensemble in the sense discussed in \cite{yapa21}, \cite{yapa22}. Our proposed procedure is formally stated in Algorithm \ref{fekkinespes}. A theoretical validation and experimental results pertaining to this procedure are presented in the following subsections.
\begin{algorithm}
	\caption{\textsc{Generation of Espresso Alternate NFSR Cipher}}\label{fekkinespes}
	\smallskip
	
	\begin{algorithmic}[1]
		\STATE \textbf{Input:} Y-P $ n $-bit Fibonacci LFSR $ L $ and Non-linear function $ z_{\text{YP}}(\xb) $; Initial state $ \bfs $ of Fibonacci LFSR.\\
		\noindent \textbf{Output:} Espresso alternate NFSR $ F $ and Non-linear function $ z_F (\xb) $; Matching initial state $ \rb $ of NFSR $ F $.
		\STATE Use Dubrova's shift formula to generate NFSR $ F $:
		\beqn
		f_{255}\xrightarrow{f_{\text{NL}}(\xb)}f_{217}
		\eeqn
		\STATE Generate combined compensation list $ C $ for the shift.
		\STATE Obtain matching initial state for $ F $ by simple compensation using $ C $ (Dubrova's theorem/Y-P lemma).
		\STATE Obtain output function $ z_F (\xb) $ from $ z_{\text{YP}}(\xb) $ by iterative compensation in the descending order using $ C $.
	\end{algorithmic}
\end{algorithm}

\subsection{Validating the Solution to EEP}
In this subsection we present theoretical analysis in support of our proposed procedure for resolving the Espresso Equivalence problem.\\
As stated in Algorithm \ref{fekkinespes}, our procedure is based on using the Dubrova-Yao-Parampalli transformation framework to transform the Yao-Parampalli LFSR-based Espresso-equivalent design into a design based on the Espresso alternate NFSR $F$. It is evident from the analysis in Subsection \ref{subs:lofornofor}, that we have achieved the LFSR-NFSR correspondence through appropriate use of Dubrova's shift formula. The final, more critical, challenge is to ensure that the output of the non-linear function so obtained is identical for both the Y-P LFSR-based design and the NFSR $F$-based design derived by us. To that end, we first examine the recipe(s) proposed by Yao-Parampalli in \cite{yapa22}, to effect transformations between Fibonacci and uniform Galois NFSRs such that they preserve the output bitstreams of non-linear functions involving several stages of the NFSRs.
\subsubsection{Issues with the Yao-Parampalli Algorithms}
As discussed in Section \ref{subs:yap}, there are three proven results in \cite{yapa22}, Lemmas \ref{lemlyapa1} and \ref{lemlyapa2} along with the Theorem \ref{thm:lomyapa} (cf. p. 238, Section 3, \cite{yapa22}), all of which deal with the transformation of a Fibonacci to a uniform Galois NFSR. We have already discussed the lemmas, and established their equivalence with Dubrova's works (\cite{dubrova10}, \cite{dubrova14}), in computing matching initial states for the Fibonacci and Galois NFSRs. We now examine the following theorem of Yao-Parampalli which claims that their algorithm (cf. Section \ref{subs:yap}) achieves equivalence in the outputs of the respective non-linear functions.
\bthm[\textbf{Theorem 1}, \cite{yapa22}]\label{thm:lomyapa}
Let an $ n $-bit Fibonacci NLFSR [sic] transformed to a Type-II Galois NLFSR according to the Fibonacci-to-Type-II algorithm, then the internal states of the
NLFSR before and after transformation denoted by $ x_i^t $ and $ \hat{x}_i^t$ satisfy
\begin{align*}
\hat{x}_i^t & = x_i^t \oplus c_i(X^t),\,\,\text{for}\,\, k \in [\tau_1+1, n-1]\\
\hat{x}_i^t & = x_i^t,\,\,\text{for}\,\, k \in [0,\tau_1]
\end{align*}
and
\begin{align*}
x_i^t & = \hat{x}_i^t \oplus \overline{c_i}(\hat{X}^t),\,\,\text{for}\,\, k \in [\tau_1+1, n-1]\\
x_i^t & = \hat{x}_i^t,\,\,\text{for}\,\, k \in [0,\tau_1]
\end{align*}
and the two NLFSRs are equivalent.
\ethm
A comparison with Lemma \ref{lemlyapa1} and Lemma \ref{lemlyapa2} shows that the formulas in Theorem \ref{thm:lomyapa} have the same form with those presented in the lemmas, with single monomials in the former replaced by compensation entries in the latter. The main body of the proof of Theorem \ref{thm:lomyapa} consists of validating the stated formulas in simple instances. The crucial assertion in the theorem regarding ``equivalence'', which appears to be the real contribution of the Y-P formulation beyond Dubrova's framework, is handled in the concluding paragraph of the proof in \cite{yapa22} as follows.\\

\noindent``\emph{If the output function $ f_z $ takes $ x_i , i \in [\tau_1 +
	1, n - 1] $ as input, the corresponding tap in $ \overline{f_z} $ is compensated by $ C $. Since the compensation is done iteratively, the $ x_i $ in $ c_i $ is also compensated. Therefore, at each clock $ t $, the bit value of that input in $ \overline{f_z} $ is equal to $ \hat{x}_i^t \oplus \overline{c_i}(\hat{X}^t) $ which is equal to $ x_i^t $ according to (10) \emph{(cf. Theorem \ref{thm:lomyapa})}. Hence, the output function $ \overline{f_z} $ generate [sic] same bit as the original function $ f_z $ does at each clock.}'' \\
	
We point out the following issues with the above statement and the steps of transformation from ``Fibonacci-to-Type II'' (uniform Galois) on p. 238  and ``Type II-to-Fibonacci'' on p. 241 in \cite{yapa22}, respectively.
\begin{itemize}
	\slen{1.26}
	\item It appears from the quoted statement that the equivalence of the outputs of the non-linear functions before and after the transformation is achieved via the following steps:
	\begin{enumerate}
		\item Perform iterative compensation on the variables constituting the output function $f_z$ to obtain $\overline{f_z}$.
		\item Perform iterative compensation on the original initial state to obtain $\hat{x}_i^t \oplus \overline{c_i}(\hat{X}^t)$ (following the notation in Lemma \ref{lemlyapa2}).
		\item Evaluate the iteratively compensated output function $\overline{f_z}$ on the iteratively compensated initial state.
	\end{enumerate}
\textbf{Issue:} It is simply stated, without proof, that this evaluation is identical with the original output function evaluated on the original initial state.
\item As shown in Subsection \ref{subs:yap}, the steps for ``Fibonacci-to-Type II'' transformation in \cite{yapa22} consist of the following sequence for producing identical function outputs: iterative compensation of output function followed by simple compensation of the initial state to be used as argument.\\
\textbf{Issue:} This is at variance with the suggested procedure in the ``proof'' of Theorem \ref{thm:lomyapa}.
\item The stated procedure for ``Type II-to-Fibonacci'' in \cite{yapa22} transformation suggests  simple compensation of output function followed by iterative compensation of the initial state to be used as argument.\\
\textbf{Issue:} In this case, swapping the sequence of compensation operation that was suggested for the earlier case of ``Fibonacci-to-Type II'' adds to the ambiguity. More so, as no reason or validation is provided for either. 
\end{itemize}
\brmk
It is noted in this context that the proof of Theorem 1 as well as that of Lemma 2 in \cite{yapa22} refers to a ``Lemma 3'' for some sort of validation. Unfortunately, the said result does not exist in the entire paper, even though it is referred to elsewhere as well.
\ermk
\brmk
In \cite{yapa22}, the authors present an illustration of their algorithm for Fibonacci-to-uniform Galois (Type-II) transformation (cf. Algorithm~2, Section 3, p. 240). They begin with a 4-bit Fibonacci NFSR with initial state $X^0=\{x_{0}^{0},x_{1}^{0},x_{2}^{0},x_{3}^{0}\}=\{1,1,0,1\}$. However, in the later steps of the algorithm, specifically from Step~5, the initial state is specified as $\{0,1,1,1\}$. No reason is given for this modification, whatsoever. In the following steps, simple compensation of the modified state $\{0,1,1,1\}$ is used to generate $\{0,1,0,0\}$, rather than the declared initial state.
\ermk
\subsubsection*{Resolving the Issues and Proof of Concept} In what follows we resolve the above issues in the Yao-Parampalli algorithms to establish the correct sequence of operations of simple and iterative transformations. Further, we formally prove that our prescribed sequence indeed achieves the following objective of Yao-Parampalli's algorithms:\\

\noindent\emph{The modified non-linear function obtained after transformation, with a modified input state as the argument, will output exactly the same bitstream as the original function with a \emph{matched} input state as argument.}\\

\noindent We begin with a result on the composition of the dependence lists of the compensation entries corresponding to Fibonacci-to-uniform Galois NFSR transformations. This is an ingredient in the proof of the main theorem which will provide theoretical validation to the claimed equivalence.
\blem\label{lemcompindu}
For the D-Y-P equivalent transformation from an $n$-bit Fibonacci to a uniform Galois NFSR, with terminal bit $0\leq\tau< n-1$, the entries $c_i$ of the compensation list $C$ satisfy the following condition:
\[\max \text{dep}(c_i) \leq i-1, \, \forall \, i \in \{\tau+1, \tau +2, \ldots,n-1\}\]
\elem
\begin{proof}
	Consider the following shift from the feedback function at the $(n-1)$-th stage of the Fibonacci NFSR:
	\[f_{n-1}\xrightarrow{g^\ast_{\tau}} f_{\tau}\]
	where we use the notation of Theorem \ref{thm:chorcompenis}, with the resulting shifted function at $f_{\tau}$ denoted by $g_{\tau}$. As the transformed Galois NFSR, defined using the notation of Theorem \ref{thm:compendubri}, is uniform, we have:
	\[\text{dep} (g_i)\subseteq [0,\tau], \,\, \forall \, i \in \{\tau+1, \tau +2, \ldots,n-1\}\]
	At some such $i$, it follows from Theorem \ref{thm:compendubri} and Theorem \ref{thm:chorcompenis} that the compensation expression is defined by the terms:
	\[g_{i-1}, g_{i-2}|_{+1}, \ldots, g_{\tau}|_{+(i-\tau-1)}\]
	The first possibly non-zero compensation term is $c_{\tau+1}$, which is simply given by $g_\tau$. Hence, it follows that, for $i= \tau+1$,
	\[\max \text{dep}(c_i) = \max \text{dep}(c_{\tau+1}) = \tau \leq i-1\]
	For $i=\tau +2$, we have the contributing terms $g_{\tau+1}$ and $g_\tau|_{+1}$ where
	\[\text{dep}(g_{\tau +1})\subseteq [0,\tau], \, \, \text{while} \,\, \text{dep}(g_\tau|_{+1})\subseteq [1,\tau+1]\]
	Hence, the assertion holds for $i=\tau +2$ as well, since $\max \text{dep}(c_{\tau+2}) \leq \tau +1 $.\\
	Further, it is readily observed that for any $i$, $\max \text{dep}(c_i) $ is determined by the term involving $g_{\tau}$, and so, we have:
	\[\max \text{dep}(c_i)\leq \text{dep}(g_{\tau}|_{+(i-\tau-1)})\leq (\tau + i-\tau -1)=i-1.\]
	This proves the assertion.
\end{proof}
\qed
We now present the main theorem of this subsection, which explicitly confirms the design of the transformed non-linear output function as well as the choice of input which achieves equivalence of the output bitstreams before and after a Fibonacci-uniform Galois transformation. Some new notation is introduced for this purpose as follows.\\
Let a function $ F $ in, say, $ n $ Boolean variables $ x_0, \ldots, x_{n-1} $, be denoted as: $ F(X) = F(x_0, \ldots, x_{n-1})$. If these variables are iteratively compensated (in descending order of the indices) by some array $ C= [c_0, c_1, \ldots, c_{n-1}] $, where each $ c_i $ is a Boolean function in $ X $, the form obtained from $ F $ is termed the iteratively compensated $ F $. We denote the iteratively compensated function with the superscript $ \iota $ as $ F^{\iota} $. Similarly, if the variables are compensated by direct XORing with the entries in $C$ - what has been termed simple compensation - the compensated function form has the superscript $\sigma$: $F^{\sigma}$.\\
If the function $F$ is evaluated at some modified value of $X$, where each $x_i$ component has been iteratively compensated as described above, the evaluation is denoted as $F(X^{\iota})$, with the analogous form for simple compensation of $X$ denoted by $F(X^{\sigma})$.\\
A combination of these compensation operations, where both the function and the argument have been compensated, will carry both the superscripts. For instance, an iteratively compensated function $F$ evaluated on simple compensated argument $X$ will be denoted by $F^{\iota}(X^{\sigma})$.\\
Using the above notation, we now state a result which settles the question of the correct choice of the compensation operations for the output function and initial state to achieve matching output bitstreams for Fibonacci to uniform Galois NFSR transformation. 

\bthm\label{thm:itorFiboGalois} For Fibonacci-to-Uniform Galois NFSR transformations according to the D-Y-P framework, any choice of compensation for the output function and the initial state will produce matching output bitstreams before and after transformation. In particular, denoting the Fibonacci output function as $f$ and the Fibonacci initial state as $X$, we have, using the notation defined above, the following equalities:
\[f^{\iota}(X^{\sigma})= f^{\sigma}(X^{\iota})= f^{\iota}(X^{\iota})= f^{\sigma}(X^{\sigma})= f(X)\]  
\ethm
\begin{proof}
	We prove the theorem in two parts as follows:
	\begin{enumerate}
		\item $f^{\iota}(X^{\sigma})= f^{\sigma}(X^{\iota})= f(X)$;
		\item $ f^{\iota}(X^{\iota})= f(X)$ and $f^{\sigma}(X^{\sigma})= f(X)$.
	\end{enumerate}
	
\noindent \textbf{Part 1.:} Consider $X_1=(x_0, x_1)$, where the subscript corresponds to the maximum index in the list of variables. By Lemma \ref{lemcompindu}, the corresponding entries of the compensation list are specified as follows: $c_0$ is a constant and $c_1 = c_1(x_0)$. Therefore, we have:
\begin{align*}
	f^{\sigma}(X_1) &= f(x_0 \oplus c_0,x_1\oplus c_1(x_0))\\
	f^{\sigma}(X_1^\iota) &= f((x_0 \oplus c_0)\oplus c_0,(x_1\oplus c_1(x_0 \oplus c_0))\oplus c_1(x_0 \oplus c_0))\\
	& = f(x_0, x_1).
\end{align*}
Similarly,
\begin{align*}
	f^{\iota}(X_1) &= f(x_0 \oplus c_0,x_1\oplus c_1(x_0 \oplus c_0))\\
	f^{\iota}(X_1^\sigma) &= f((x_0 \oplus c_0)\oplus c_0,(x_1\oplus c_1(x_0 ))\oplus c_1((x_0 \oplus c_0) \oplus c_0))\\
	& = f(x_0, x_1\oplus c_1(x_0 )\oplus c_1(x_0 ))\\
	& = f(x_0, x_1).
\end{align*}
Hence, we have: $ f^{\sigma}(X_1^\iota) = f^{\iota}(X_1^\sigma)= f(X_1).$
We prove the result by induction on $i$, where $X_i = (x_0,x_1, \ldots, x_{i})$.\\
Assume, as induction hypothesis, that the result holds for all $i\leq k-1$. We will prove the result holds for $i =k$, i.e.
\beq\label{eqkokkitsim}
f^{\sigma}(X_k^\iota) = f^{\iota}(X_k^\sigma)= f(X_k).
\eeq
The form of the function is compensated first and then the argument is compensated prior to evaluation. Hence, it can be verified that 
\[f^\iota (X_{k-1}^\sigma) = f((x_0^\iota)^\sigma, \ldots, (x_{k-1}^\iota)^\sigma)\]
with an analogous expansion for $f^\sigma(X_{k-1}^\iota)$.\\
Either way, the result in Equation \ref{eqkokkitsim} is proved if the following holds, making use of the induction hypothesis:
\[(x_k^\sigma)^\iota = (x_k^\iota)^\sigma = x_k \,\,\text{given}\,\, (X_i^\sigma)^\iota = (X_i^\iota)^\sigma = X_i,\,\,\forall \, i\leq k-1.\]
We have:
\begin{align*}
	(x_k^\sigma)^\iota & = (x_k \oplus c_k(X_{k-1}))^{\iota}\\
	&= x_k^\iota \oplus c_k(X_{k-1}^\iota)\\
	&= x_k \oplus c_k(X_{k-1}^\iota)\oplus c_k(X_{k-1}^\iota)\\
	&= x_k.
\end{align*}
On the other hand,
\begin{align*}
	(x_k^\iota)^\sigma & = (x_k \oplus c_k(X_{k-1}^{\iota}))^\sigma\\
	& = (x_k \oplus c_k(X_{k-1}))\oplus c_k((X_{k-1}^\iota)^\sigma)\\
	& = x_k \oplus c_k(X_{k-1})\oplus c_k(X_{k-1})\,\, \, (\text{by induction})\\
	& = x_k.
\end{align*}
Hence the equalities in Equation \ref{eqkokkitsim} hold.\\

\noindent \textbf{Part 2.:} Consider evaluation of $f^\iota(X_1^\iota)$ in two stages as follows. First we compute the form of the iteratively compensated function (in descending order):
\[f^\iota(X_1) = f(x_0^\iota, x_1^\iota)= f(x_0 \oplus c_0, x_1 \oplus c_1(x_0 \oplus c_0))\]
Next we evaluate the above compensated function on an iteratively compensated argument, i.e. replacing all the variables in the above expression with iteratively compensated variable forms \emph{in descending order}. Therefore, we have:

\begin{align*}
	f^\iota(X_1)& \xrightarrow{x_1\rightarrow x_1^\iota} f(x_0 \oplus c_0, (x_1 \oplus c_1(x_0 \oplus c_0))\oplus c_1(x_0 \oplus c_0))= f(x_0 \oplus c_0, x_1)\\
	& \xrightarrow{x_0\rightarrow x_0^\iota} f((x_0 \oplus c_0 )\oplus c_0, x_1)= f(x_0,x_1)\\
	& =f^\iota(X_1^\iota).
\end{align*}
In the same manner as computed above for $k=1$, the first pass involving the iterative compensation of the argument in $f^\iota(X_k)$, for any positive integer $k\leq n-1$, will be:
\[f^\iota (X_k)\xrightarrow{x_k\rightarrow x_k^\iota} f(X_{k-1}^\iota,(x_k \oplus c_k(X_{k-1}^\iota))\oplus c_k(X_{k-1}^\iota))= f(X_{k-1}^\iota,x_k).\]
Therefore, the successive passes for the iterative compensation of the argument will clearly return
\[f^\iota(X_k^\iota) = f(X_k).\] 
Finally, we have:
\begin{align*}
	f^\sigma(X_1^\sigma) &= f(x_0^\sigma\oplus c_0,x_1^\sigma\oplus c_1)\\
	&= f((x_0\oplus c_0)\oplus c_0, (x_1\oplus c_1)\oplus c_1)\\
	&= f(x_0,x_1)
\end{align*}
which is readily extended from $k=1$ to any positive integer $k\leq n-1$:
\[f^\sigma(X_k^\sigma) = f(X_k).\]
Hence the theorem.
\end{proof}
\qed
In view of the results in Theorem \ref{thm:itorFiboGalois}, we now have sufficient clarity as regards the D-Y-P Fibonacci-to-uniform Galois NFSR transformation for choosing either iterative or simple compensation for the original output function as well as the original initial state to produce matched output bitstreams post transformation.\\

\noindent Our experimental results in the next subsection further corroborate the statement of Theorem \ref{thm:itorFiboGalois}, which removes all ambiguity regarding the steps of this particular NFSR transformation.
\subsection{Experimental Results}
In \cite{yapa21}, a $256$-bit Fibonacci LFSR was derived from the Espresso design NFSR $G$ using (Dubrova-Yao-Parampalli) uniform Galois-to-Fibonacci transformation. The transformation generated a counterpart to the non-linear output function defined for Espresso which had $2289$ monomials in $104$ variables and degree $12$. It was claimed that this LFSR/non-linear function ensemble, following the recipe of simple compensation of the output function and iterative compensation of the initial state (\cite{yapa21}, \cite{yapa22}), generated identical bitstreams with Espresso. We have experimentally verified this claim, which is depicted in the left halves of the schematics in Figure \ref{fig:akbebhul} and Figure \ref{fig:reseep}.\\ 

\noindent Building on the verified claim in \cite{yapa21}, we have successfully used the D-Y-P Fibonacci-to-uniform Galois transformation on the LFSR-based ensemble to generate a non-linear output function corresponding to the Espresso alternate Galois NFSR $F$ (cf. Algorithm \ref{fekkinespes}). Our non-linear function consists of $3656$ monomials in $88$ variables and has algebraic degree $18$. This NFSR $F$/non-linear function ensemble generates identical output bitstream, using iterative compensation of the output function and simple compensation of the initial state (cf. \cite{yapa22}), with the Fibonacci LFSR/non-linear function ensemble in \cite{yapa21}.\\
\emph{Further, in accordance with the results in Theorem \ref{thm:itorFiboGalois}, we have verified that the output bitstream of our NFSR $F$-based ensemble matches the Yao-Parampalli LFSR-based ensemble using \emph{both simple and iterative compensation} separately on their output function and the LFSR initial state.}\\

\noindent We have, therefore, positively resolved the Espresso Equivalence Problem as stated in Section \ref{secyapdubri} (cf. Figure \ref{fig:dubrigupi}), by generating a non-linear output function corresponding to the Espresso alternate NFSR $F$ which `inherits' the equivalence with the original Espresso design through the following steps:
\begin{enumerate}
	\item Espresso Galois NFSR $G$ ensemble is output-equivalent to Yao-Parampalli Fibonacci LFSR ensemble
	\item Yao-Parampalli Fibonacci LFSR ensemble is output-equivalent to Espresso NFSR $F$ ensemble (this paper).
	\item Therefore, Espresso Galois NFSR $G$ ensemble is output-equivalent to Espresso NFSR $F$ ensemble.
\end{enumerate}
This is graphically depicted in Figure \ref{fig:reseep}.

\begin{figure}[H] 
	\centering
	\includegraphics[width= \textwidth]{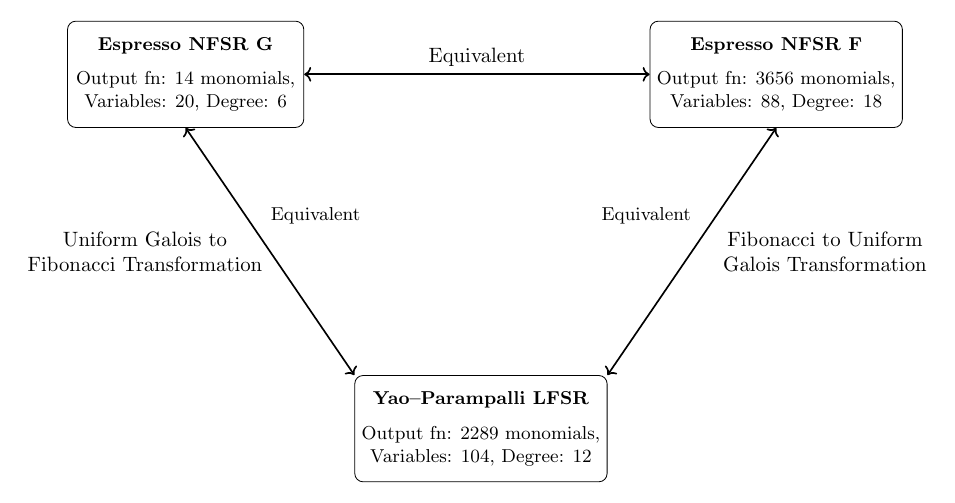}		
	\caption{Resolving the Espresso Equivalence Problem}
	\label{fig:reseep}
\end{figure}

\section*{Conclusion and Future Work}
The design methodology of the stream cipher Espresso is theoretically significant as well as of practical interest as it attempts to harness the benefits of a Galois NFSR-based implementation while promising simpler analysis through an equivalent design. But the validity of its security claims against several standard cryptanalytic techniques rested on the equivalence of two designs discussed in \cite{dubrova17}. We have termed this the Espresso Equivalence Problem and have demonstrated the first instance for its positive resolution.\\ 
In order to demonstrate the said equivalence, we have attempted a partial unification of E. Dubrova and Yao-Parampalli's transformation techniques. The insights obtained from the above analysis led us to recover the Espresso alternate NFSR $F$ from the Y-P LFSR, and, finally, a matching non-linear function for $ F $. This has opened up the possibility of revisiting the security of Espresso through a proper analysis of the now truly output-equivalent alternate NFSR $F$ and non-linear function ensemble. However, it remains an open question whether the two output-equivalent versions will provide any significant advantage in analysis, as evident from the following list of properties:
\begin{itemize}
	\item \textbf{Espresso:} \cite{dubrova17} Galois NFSR $G$ with $14$ non-linear taps; Output function: $14$ monomials, $20$ variables, algebraic degree $6$.
	
	\item \textbf{Y-P Design:} \cite{yapa21} Fibonacci LFSR; Output function: $2289$ monomials, $104$ variables, algebraic degree $12$.
	
	\item \textbf{Espresso Equivalent:} (this paper) Galois NFSR $F$ with $2$ non-linear taps; Output function: $3656$ monomials, $88$ variables, algebraic degree $18$.	
\end{itemize}
In course of establishing the Espresso equivalence, we have demonstrated that (cf. Theorem \ref{thm:itorFiboGalois}), for the Fibonacci-to-uniform Galois transformation, either compensation procedure applied to the output function-initial state would suffice to match the output. This assertion has also been verified by our experiments. However, this begs the obvious question:\\
Why is the sequence of compensation operations applied in \cite{yapa22}, with respect to Fibonacci-to-uniform Galois transformation, to the output function and the initial state, \emph{interchanged} for the uniform Galois-to-Fibonacci case?\\
The answer to the above question may lie in formulating and proving analogous results of Lemma \ref{lemlyapa1}, Lemma \ref{lemlyapa2}, and Theorem \ref{thm:lomyapa} for the case of uniform Galois-to-Fibonacci case. Further, in addition to the Fibonacci-uniform Galois transformation, three other NFSR designs have been discussed in \cite{yapa22}. However, as stated earlier, rigorous proof of concept has only been provided for Fibonacci-to-uniform Galois case. Proving (or disproving) the (output) equivalence in all these cases would be a worthwhile research direction.\\
Recently, there is a growing body of research on NFSR equivalence, for instance, \cite{zhaoequiv20}, \cite{zhongTrivnekk21} and \cite{panequivGaloisFibo}. In \cite{zhongTrivnekk21}, the authors claim to establish that the $288$-stage Galois NFSR of Trivium cannot be shown to be equivalent to a $288$-stage Fibonacci NFSR. We hope that further investigation would lead to the creation of a general template for equivalence-preserving shift register transformations. Such a template, if rigorously formulated, would open up the exciting possibility of obtaining equivalent circuits for existing designs of proven security like Grain and Trivium, and possibly identify variants with optimized implementation.
\subsubsection*{Acknowledgement}
%The authors would like to thank the anonymous reviewers for their detailed comments that improved both the content and presentation. 
The assistance of Debendranath Das with initial experimental verifications and the inputs from Subhamoy Maitra, Mridul Nandi and Sabyasachi Karati are gratefully acknowledged. The work of Anirban Ghatak is supported by SAG, DRDO, Delhi, and R. C. Bose Centre for Cryptology and Security, ISI, Kolkata.
\bibliography{espes}

\end{document}